\title{Range Reporting for Time Series via Rectangle Stabbing} 
\author{Lotte Blank}{University of Bonn, Germany }{lblank@uni-bonn.de}{https://orcid.org/0000-0002-6410-8323}{}
\author{Anne Driemel}{University of Bonn, Germany}{driemel@cs.uni-bonn.de}{https://orcid.org/0000-0002-1943-2589}{}
\authorrunning{L. Blank and A. Driemel} 
\keywords{Data Structures, Fréchet distance, Rectangle Stabbing, Orthogonal Range Searching} 
\newtheorem{problem}[theorem]{Problem}
\newcommand{\F}{{Fréchet distance }}
\newcommand{\Fm}{{Fréchet distance}}
\newcommand{\FQ}{{Fréchet queries }}
\begin{document}

\maketitle

\begin{abstract}
We study the Fréchet queries problem. It is a data structure problem, where we are given a set $S$ of $n$ polygonal curves and a distance threshold $\rho$. The data structure should support queries with a polygonal curve $q$ for the elements of $S$, for which the continuous Fréchet distance to $q$ is at most $\rho$. Afshani and Driemel in 2018 studied this problem for two-dimensional polygonal curves and gave upper and lower bounds on the space-query time tradeoff. We study the case that the ambient space of the curves is one-dimensional and show an intimate connection to the well-studied rectangle stabbing problem. Here, we are given a set of hyperrectangles as input and a query with a point $q$ should return all input rectangles that contain this point. Using known data structures for rectangle stabbing or orthogonal range searching this directly leads to a data structure with $\mathcal{O}(n \log ^{t-1} n)$ storage and $\mathcal{O}(\log^{t-1} n+k)$ query time, where $k$ denotes the output size and $t$ can be chosen as the maximum number of vertices of either (a) the stored curves or (b) the query curves. 
The resulting bounds improve upon the bounds by Afshani and Driemel in both the storage and query time. In addition, we show that known lower bounds for rectangle stabbing and orthogonal range reporting with dimension parameter $d= \lfloor t/2 \rfloor$ can be applied to our problem via reduction. 
\end{abstract}

\section{Introduction}
The \emph{\F} is a popular measure of similarity of two curves $q$ and $s$ with broad application in many areas, including geographical information science \cite{Meulemans2014SimilarityMA, GIS_3, Wenk2006AddressingTN}, computational biology \cite{CompBio, Wylie2013ProteinCP}, image processing \cite{Alt2009, Sriraghavendra2007FrchetDB}, and quantum chemistry~\cite{ZHU2023619}. 
We focus on a data structuring problem which we refer to as the \emph{\FQ problem}. Here, in the preprocessing phase, we are given a set $S$ of $n$ polygonal curves and the distance threshold $\rho$. The task is to store this set in a data structure that can answer the following type of queries efficiently: For a polygonal curve $q$, output all curves in $S$ that have distance at most $\rho$ to $q$. 
Afshani and Driemel \cite{AfshaniDriemel18} studied this problem in 2018 for two-dimensional curves providing non-trivial upper bounds for the exact case. Recently, Cheng and  Huang \cite{Cheng2023SolvingFD} have generalized their approach for higher dimensions. Other works on variants of this problem have focused on the approximate setting~\cite{BDNP21,DEBERG2013747,DP21,DS17,FFK20}.

We focus on the exact setting and---following previous work \cite{BDNP21,DP21}---we restrict the ambient space of the curves to be 1-dimensional, that is, they are time series.
Time series appear in massive amounts in many different applications where they are used to track, e.g., the change over time in stock market value, vitality parameters of patients,  atmospheric conditions, such as temperature, the Earth’s population, and the hourly requests of a webpage. 

In the following, we refer to the number of vertices of a polygonal curve as the complexity of the curve and usually assume that this complexity is constant. We will see that this complexity roughly corresponds to the dimension of the problem when viewed as a rectangle stabbing or orthogonal range reporting problem.

\subparagraph*{Previous work: Exact setting}
Afshani and Driemel \cite{AfshaniDriemel18} propose a data structure based on multi-level partition trees for two-dimensional curves using semi-algebraic range searching. An essential ingredient to their work is the use of a finite number of predicates that retain sufficient information on the curves to solve the Fr\'echet queries problem within the partition tree framework. 
The resulting data structure for polygonal curves in the plane has size $\mathcal{O}\left(n (\log\log n)^{\mathcal{O}({t_s}^2)}\right)$ and uses query time in $\mathcal{O}\left(\sqrt{n}\cdot \log^{\mathcal{O}({t_s}^2)}n+k\right)$, where $t_s$ denotes the complexity of the input curves and $k$ the output size. The same technique can be applied to solve the problem for 1-dimensional curves using orthogonal range searching. In this case, their bounds reduce to $\mathcal{O}\left(n\left({\log n}/{\log \log n}\right)^{\mathcal{O}({t_s}^2)}\right)$ size and $\mathcal{O}\left(\log n \left({\log n}/{\log \log n}\right)^{\mathcal{O}({t_s}^2)}+k\right)$ query time.
For all time series $s$ of complexity $t_s$, an $\mathcal{O}({t_s}^2)$-dimensional point $p(s)$ is stored. The dimension of $p(s)$ is quadratic in $t_s$, because for every pair of vertices of $s$ values depending on both vertices are stored. These are used to evaluate the predicates mentioned above. We substantially simplify these predicates and show that this leads to improved bounds in the 1-dimensional case. 

Cheng and  Huang \cite{Cheng2023SolvingFD} used the same predicates as Afshani and Driemel to build a data structure for $d$-dimensional polygonal curves. They constructed a set of polynomials such that their signs encode the truth values of those predicates. This leads to a data structure of size in $\mathcal{O}(t_q t_s n)^{\mathcal{O}(d^4{t_q}^2\log(dt_q))}$ and query time in $\mathcal{O}((d t_q)^{O(1)}\log(t_q t_s n)+k)$, where $t_q$ denotes the complexity of the query time series.

Afshani and Driemel \cite{AfshaniDriemel18} also provide lower bounds in the pointer machine model. Using a volume argument, they show a lower bound stating that every data structure with $Q(n) +\mathcal{O}(k)$ query time, where $k$ is the output size, has to use roughly $\Omega((n/Q(n))^2)$ space in the worst case even if the query curves are just line segments or points for the discrete \Fm.

\subparagraph*{Previous work: Approximate setting}
Bringmann, Driemel, Nusser, and Psarros~\cite{BDNP21} study the setting of 1-dimensional curves.
Their work focuses on the $c$-approximate version of the near-neighbor ($c$-ANN) problem under the Fr\'echet distance. In this problem, only one of the curves in the query range needs to be reported and only if the query range is not empty. The approximation is defined with respect to the query radius.  Using a bucketing method, they construct a set of curves $S'$ depending on the input curves $S$, which are stored in a dictionary. They show that, given a query curve $q$, there must exist a curve in $S'$ very close to $q$ if there exists some curve in $S$ within distance $\rho$ of $q$. 
In this way, they constructed a $(1+\varepsilon)$-ANN data structure of size $n\cdot\mathcal{O}(t_s/(t_q\varepsilon))^{t_q}$ and query time in $\mathcal{O}(1)^{t_q}$. The query time reduces to $\mathcal{O}(t_q)$ with the same space bound for the $(2+\varepsilon)$-ANN data structure. Furthermore, a $(2+\varepsilon)$-ANN data structure with linear size $\mathcal{O}(t_s n)$ and query time in $\mathcal{O}(1/\varepsilon)^{t_q+2}$ is obtained. 
Their lower bounds show tightness of these bounds in several parameters, assuming the complexity of the curves depends on $n$ (i.e., it is not a constant). To this end, they consider the total time necessary to build the data structure and to answer $n$ queries. They show that, assuming the Orthogonal Vectors Hypothesis, a running time of $n \cdot (t_s/t_q)^{\Theta(t_q)}$ is necessary for any data structure that achieves an approximation factor $\alpha \in [1,2)$. 

The conditional lower bounds of Bringmann et al.~\cite{BDNP21} also apply to the exact setting, however, they assume the parameters $t_s$ and $t_q$ to be non-constant. In light of this, we focus on the setting where $t_s$ and $t_q$ are constants independent of $n$.

\subparagraph*{Our results}
Section~\ref{s: preliminaries} contains a formal definition of the data structure problem studied in this paper: the \FQ problem. Section~\ref{s:datastructure_techniques} contains the definition and known results for rectangle stabbing, as well as its dual problem, orthogonal range reporting. Our analysis shows an intimate connection to these classical problems studied in computational geometry as we use them for deriving both upper and lower bounds for the \FQ problem.
We start in Section~\ref{s: Lower bound} with a reduction showing that both rectangle stabbing and orthogonal range reporting in $d$ dimensions can be solved using a data structure  for the \FQ problem using curves of complexity $t=2d$. 

In Section \ref{predicates}, we review the known predicates of Afshani and Driemel which are used to test the Fr\'echet distance within the partition tree framework. 
Section~\ref{s: lemmas} contains our main lemmas for simplifying these predicates and introduces the new concept of forward and backward numbers. Here, we take advantage of the fact that the direction of each edge of a time series can only be orientated forward or backward with respect to the $x$-axis. 

The resulting data structures are presented in Section~\ref{s:Data structure}. We present two variants. Let $t_s$ be the complexity of the input and $t_q$ of the query and assume $t_s$ and $t_q$ are constant. The first data structure has size in $\mathcal{O}(n \log ^{t_q-2}n)$ and uses query time in $\mathcal{O}(\log^{t_q-1}n+k)$ and is independent of $t_s$.
The second data structure has size in $\mathcal{O}(n(\log n/\log \log n)^{t_s-1})$ and query time in $\mathcal{O}(\log n(\log n/\log \log n)^{t_s-3}+k)$ and is independent of $t_q$. In both variants, $k$ denotes the size of the output (without duplicates). 

Together with known lower bounds for rectangle stabbing and orthogonal range reporting, our analysis in Section~\ref{s: Lower bound} implies that every data structure that solves the \FQ problem and uses $nh$ space has to use $\Omega(\log n(\log n/\log h)^{\lfloor t/2\rfloor-2}+k)$ query time, where $t=\min\{t_q, t_s\}$.  If  the data structure uses query time in $\mathcal{O}(\log ^c n+k)$, where $c$ is a constant, it must use $\Omega(n(\log n/\log \log n)^{\lfloor t/2\rfloor-1})$ space.


\section{Problem Definition}\label{s: preliminaries}
For any two points $p, q\in \mathbb{R}^d$, $\overline{pq}$ is the directed line segment from $p$ to $q$. The linear interpolation of each pair of consecutive vertices of a sequence of vertices ${s_1,\dotso, s_{t_s}}\in \mathbb{R}^d$ is called a polygonal curve and its \emph{complexity} is the number of its vertices. This curve is also denoted as $\langle s_1, \dotso, s_{t_s}\rangle$. We can represent polygonal curves as functions $s:[1, t_s]\rightarrow \mathbb{R}^d$, where 
$s(i+\alpha)=(1-\alpha) s_i+\alpha s_{i+1}$ for $i\in \{1, \dotso, t_s\}$ and $\alpha \in [0, 1]$. The \emph{(continuous) \F} between polygonal curves $q: [1,t_q]\rightarrow \mathbb{R}^d$ and $s:[1,t_s]\rightarrow \mathbb{R}^d$ is defined as
\[ d_{\text{F}}(q, s)=\inf_{h_q\in \mathcal{F}_q, h_s\in \mathcal{F}_s }\ \max_{p\in [0,1]} \| q(h_q(p))-s(h_s(p))\|_2,\]
where $\mathcal{F}_q$ is the set of all continuous, non-decreasing functions $h_q: [0,1]\rightarrow [1,t_q]$ with $h_q(0)=1$ and $h_q(1)=t_q$, respectively $\mathcal{F}_s$ for $s$.

\begin{problem}[Fr\'echet queries]\label{prob}
Given a set $S$ of $n$ time series all of complexity at most $t_s$, the complexity $t_q$ of the query time series and a distance parameter $\rho \in \mathbb{R}_{\geq 0}$. Find a data structure that stores this set $S$ and can answer the following type of queries. For any query time series $q$ of complexity $t_q$, return all elements of $S$ that have continuous \F at most $\rho$ to $q$.
\end{problem}

\begin{example}\label{e:small-example}
We give a simple example demonstrating why our results are surprising. At first sight, it seems intriguing to believe that the set of queries corresponding to an input curve can be viewed as a finite union of axis-aligned hyperrectangles in the dimension of the (fixed) query curve complexity~$t_q$. However, a simple example shows that this is not always the case.
Let $s=\langle s_1, s_2\rangle$ be a time series of complexity $2$, where $s_1\leq s_2$. In \cref{s: lemmas}, we show the following statement. For every time series $q=\langle q_1, q_2, q_3, q_4\rangle$ of complexity~$4$, it holds that $d_F(q, s)\leq \rho$ if and only if $(q_1, q_2, q_3, q_4)$ is contained in the following set $W$ \[\{(p_1, p_2, p_3, p_4)|\ p_1\in [s_1-\rho, s_1+\rho], p_4\in [s_2-\rho, s_2+\rho], p_2, p_3\in [s_1-\rho, s_2+\rho], p_3\geq p_2-2\rho \}.\]

\begin{figure}
    \centering
    \includegraphics{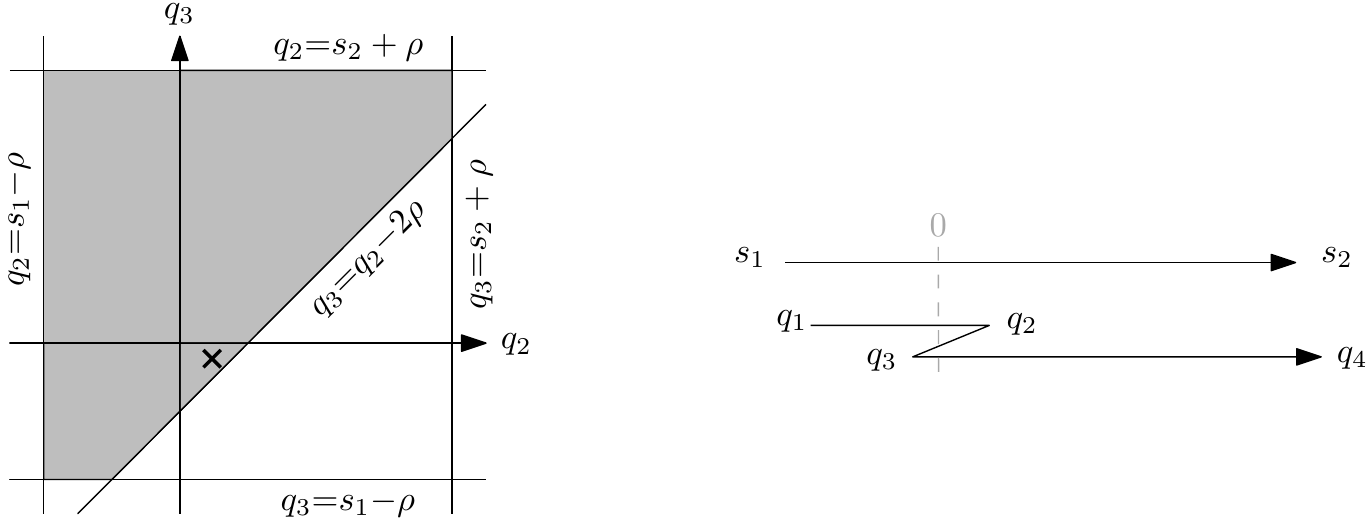}
    \caption{The $q_2$- and $q_3$-coordinates of the set $W$ of Example \ref{e:small-example}. Additionally, it must hold that $q_1\in [s_1-\rho, s_1+\rho]$ and $q_4\in [s_2-\rho, s_2+\rho]$. On the right is an example for such a time series $q$ with respect to $s$ and the corresponding point $(p_2, p_3)$ is marked.}
    \label{fig:small-example}
\end{figure}
The (non-orthogonal) condition $ p_3\geq p_2-2\rho$ stems from the monotonicity requirement in the definition of the Fr\'echet distance. 
The query space can be re-parameterized by introducing new variables to overcome this and to obtain a finite union of axis-aligned hyperrectangles, as this is implicitly done by Afshani and Driemel. To see this for this specific example, we can introduce an additional variable  $h$ with $h=q_2-q_3$ and $h \in [s_1-s_2-2\rho, 2\rho]$.
Achieving this with only a few additional variables (without blowing up the dimension quadratically as in the work of Afshani and Driemel) is the main challenge of our work. The key ingredient to our analysis is a simplification of the predicates---which goes along with a reduction of their overall number.
\end{example}

\section{Data Structure Techniques}\label{s:datastructure_techniques}

In this paper, we will show an intimate connection of the \FQ problem to rectangle stabbing and orthogonal range searching. We first describe these data structure problems independently and state the known results we will use in our analysis.

\subparagraph*{Rectangle Stabbing.} For rectangle stabbing the task is as follows. Preprocess a set $S$ of $n$ axis-aligned $d$-dimensional rectangles in $\mathbb{R}^d$ into a data structure so that all rectangles in $S$ containing a query point $q$ can be reported efficiently, ensuring that each such rectangle is reported exactly once.

Chazelle \cite{Cha86} developed a data structure for this problem with constant dimension $d$ that has size in $\mathcal{O}\left(n \log ^{d-2} n\right)$ and query time in $\mathcal{O}(\log^{d-1} n+k)$, where $k$ is the size of the output.
Afshani, Arge and Larsen~\cite{AAL12} proved the following lower bound for the rectangle stabbing problem. Any data structure that operates on a pointer machine and uses $n h$ space must use $\Omega\left(\log n(\log n/\log h)^{d-2}+k\right)$ query time, where $k$ is the output size. To prove a lower bound for the \FQ problem, we need a bounded version of rectangle stabbing. Here, all rectangles in $S$ are contained in $[0, 1]^d$.
The constructive proof for the lower bound uses only instances, where the input rectangles are all contained in a $d$-dimensional cube with side length $m<n$. By scaling this instance, we obtain that the lower bound holds also for bounded rectangle stabbing.

\subparagraph*{Orthogonal Range Searching.}
Orthogonal range searching is defined as follows.
Preprocess a set $S$ of $n$ points in $\mathbb{R}^d$ into a data structure so that for a $d$-dimensional axis-aligned query rectangle $R$ all points contained in $S$ can be reported efficiently, ensuring that each such point is reported exactly once.

Afshani, Arge and Larsen \cite{AAL12} construct a data structure for constant dimension $d>2$ that uses $\mathcal{O}\left(n(\log n/\log\log n)^{d-1}\right)$ space and query time in $\mathcal{O}\left(\log n(\log n/\log \log n)^{d-3}+k\right)$, where $k$ is the size of the output. Later, we reduce the orthogonal range searching problem to the \FQ problem and then use the following lower bound by Chazelle \cite{Chazelle_1990}. Consider a data structure of orthogonal range searching on $n$~points in $\mathbb{R}^d$ that operates on a pointer machine, and let $c$ be an arbitrary constant. If the data structure provides a query time of $\mathcal{O}((\log n)^c+k)$, where $k$ is the output size, then its size must be in $\Omega(n(\log n/\log \log n)^{d-1})$.




\section{Lower Bounds}\label{s: Lower bound}
We transform the bounded rectangle stabbing problem to the \FQ problem such that we can use a known lower bound for the bounded rectangle stabbing problem to obtain a lower bound for the \FQ problem. An illustration of the reduction can be found in Example \ref{e: reduction continuous}.

Given a set $S$ of $n$ axis-aligned rectangles contained in $[0,1]^d$ as an instance of the $d$-dimensional bounded rectangle stabbing problem. We define a set $S'$ containing $n$ time series of complexity $2d$. For a rectangle $R=\left[l_1, r_1\right]\times\left[l_2, r_2\right]\times\cdots \times\left[l_d, r_d\right]$ in $S$, we store the time series $s(R)=\langle s_1,\dotso, s_{2d}\rangle$, where  \[s_{2i-1}=(r_i+1)+6i \text{ and } s_{2i}=(l_i-1)+6i.\]
The set $S'$ is stored in a data structure for the \FQ problem. We define a query time series $q=\langle q_1, \dotso, q_{2d} \rangle$ for a query point $p=(p_1, \dotso, p_d)\in [0, 1]^d$, where \[q_{2i-1}=(p_i+2)+6i\text{ and } q_{2i}=(p_i-2)+6i.\]
To find all time series $s(R)$ in $S'$ within \F at most $1$ to $q$, we use the stored data structure. All rectangles $R$, where $d_{\text{F}}(q, s(R))\leq 1$, will be returned. Theorem \ref{t: reduction} implies that this reduction is correct by showing that
$p\in R\ \Leftrightarrow d_{\text{F}}(q, s(R))\leq 1$.

\begin{example}\label{e: reduction continuous}
    The input set $S$ of a rectangle stabbing instance contains the rectangles $R=[0.2, 0.6]\times[0.4, 1]\times[0.4, 0.6]$ and $\widehat{R}=[0, 0.4]\times[0.2, 0.6]\times[0.8, 1]$ and the query point is $p=(0.3, 0.8, 0.5)$. It is evident that $p\in R$ and $p\notin \widehat{R}$. Through our reduction, the two stored time series are ${s(R)=\langle 7.6, 5.2, 14, 11.4, 19.6, 17.4\rangle}$ and $s(\widehat{R})=\langle 7.4, 5, 13.6, 11.2, 20, 17.8\rangle$ and the query time series for $p$ is $q=\langle 8.3, 4.3, 14.8, 10.8, 20.5, 16.5 \rangle$, as illustrated in Figure \ref{fig: reduction continuous}.
    \begin{figure}
        \begin{center}
            \includegraphics[page=1]{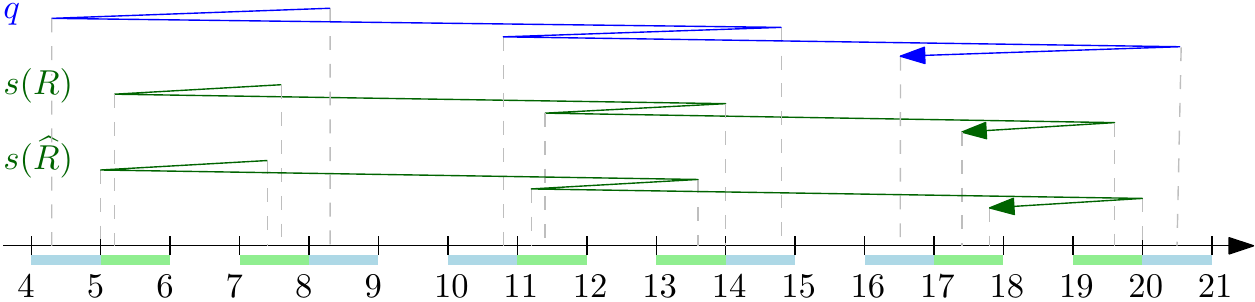}
            \caption{The time series $q$, $s(R)$ and $s(\widehat{R})$ as in Example \ref{e: reduction continuous}.}
            
            \label{fig: reduction continuous}
        \end{center}
    \end{figure}
    The left side of {Figure \ref{fig: reduction free space diagram}} depicts the free space diagram of the time series $q$ and $s(R)$ with respect to $\rho=1$ (i.e., $F_1(q, s(R))$). Notably, the points $(i,i)$ lie in the free space for all $i$, resulting in $d_{\text{F}}(q, s(R))\leq 1$. Conversely, the right side of {Figure \ref{fig: reduction free space diagram}} corresponds to the free space diagram of the time series $q$ and $s(\widehat{R})$ with respect to $\rho=1$ (i.e., $F_1(q, s(\widehat{R}))$). It does not contain a feasible path and $|q_3-s(\widehat{R})_3|>1$ and $|q_6-s(\widehat{R})_6|>1$. 
    \begin{figure}
        \begin{center}
            \includegraphics[page=1]{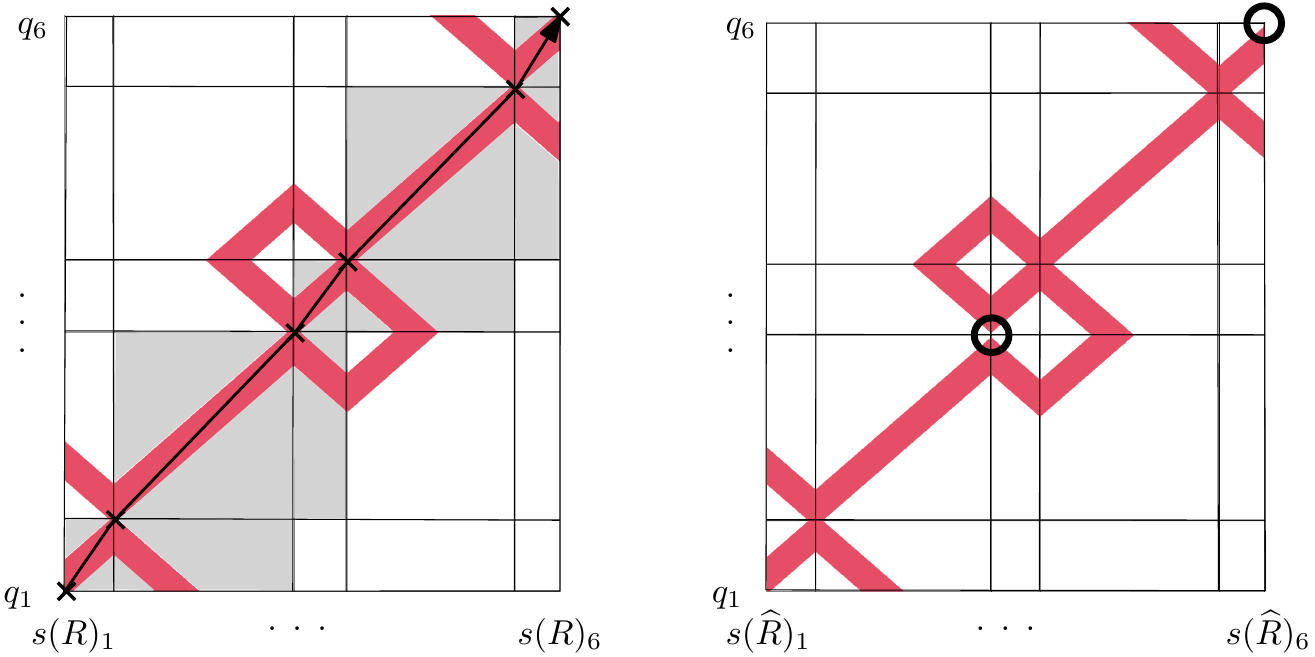}
            \caption{The free space diagrams $F_{1}(q, s(R))$ and $F_{1}(q, s(\widehat{R}))$ defined in Example \ref{e: reduction continuous}. A sequence of cells $\mathscr{C}$ that is feasible in $F_{1}(q,s(R))$ is drawn in grey.}
            \label{fig: reduction free space diagram}
        \end{center}
    \end{figure}
\end{example}

\begin{theorem}\label{t: reduction} 
    The $d$-dimensional bounded rectangle stabbing problem can be solved with a data structure for the \FQ problem, where the stored time series as well as the query time series have complexity $2d$. The instance for the \FQ problem can be computed in linear time.
\end{theorem}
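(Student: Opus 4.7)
The linear-time construction is immediate: each rectangle $R$ is mapped to the time series $s(R)$ of constant complexity $2d$ by $\mathcal{O}(d)$ arithmetic operations on its corners. The core claim is the equivalence $p \in R \Leftrightarrow d_{\text{F}}(q, s(R)) \leq 1$. The plan is to exploit two structural features of the construction. First, the offsets $6i$ separate the vertex pairs $(q_{2i-1}, q_{2i})$ and $(s_{2i-1}, s_{2i})$ into $d$ essentially disjoint ``blocks'' in height, with block $i$ centered around $6i$; consecutive blocks of $s(R)$ differ by at least $7$ at their nearest vertices, so the piecewise-linear shape is a clean alternation of peaks and valleys. Second, the within-block vertex differences satisfy $q_{2i-1}-s_{2i-1} = p_i-r_i+1$ and $q_{2i}-s_{2i} = p_i-l_i-1$, so $|q_{2i-1}-s_{2i-1}|\leq 1$ exactly when $p_i\leq r_i$, and $|q_{2i}-s_{2i}|\leq 1$ exactly when $p_i\geq l_i$ (the other endpoints of those two intervals are trivially satisfied since $p_i,l_i,r_i\in[0,1]$).

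For the forward direction, I would exhibit the ``diagonal'' parameterization which matches each edge of $q$ linearly to the edge of $s(R)$ with the same index. The two observations above yield pointwise distance at most $1$ at every vertex, and because $q(t)-s(t)$ is affine on each edge, the pointwise distance along an edge is bounded by the maximum of its endpoint distances, hence also at most $1$.

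For the reverse direction, assume $p\notin R$, WLOG with $p_i>r_i$ for some $i$, and suppose toward contradiction that there is a parameterization witnessing $d_{\text{F}}\leq 1$. Let $\tau_{2i-1}$ be the parameter on $s(R)$ matched to the vertex $q_{2i-1}$. The distance bound forces $s(\tau_{2i-1})\geq q_{2i-1}-1 = s_{2i-1}+(p_i-r_i) > s_{2i-1}$. Since $s_{2i-1}$ is the peak of block $i$, the edges $s_{2i-2}s_{2i-1}$ and $s_{2i-1}s_{2i}$ (covering parameters $[2i-2,2i]$) satisfy $s\leq s_{2i-1}$; therefore $\tau_{2i-1}\notin [2i-2,2i]$. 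I would then rule out the two remaining cases. If $\tau_{2i-1}<2i-2$, the maximum of $s$ on earlier blocks is at most $6(i-1)+2=6i-4 < q_{2i-1}-1$, contradicting the bound. If $\tau_{2i-1}>2i$, then monotonicity of the matching gives $\tau_{2i}\geq\tau_{2i-1}>2i$; on the increasing edge $s_{2i}s_{2i+1}$ we get $s(\tau_{2i})\geq s(\tau_{2i-1})>6i$, and on any later edge $s(\tau_{2i})\geq s_{2i+2}>6i$, whereas the constraint at $q_{2i}$ demands $s(\tau_{2i})\leq q_{2i}+1\leq 6i$. A symmetric argument, focusing on the valley $s_{2i}$ and the vertex $q_{2i}$, handles the case $p_i<l_i$.

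The main obstacle is the ``between-block'' edge $s_{2i}s_{2i+1}$, whose image sweeps upward through the altitude band required by $s(\tau_{2i-1})\geq q_{2i-1}-1$, so a priori $\tau_{2i-1}$ could be placed on it. Resolving this requires coupling the constraint at $q_{2i-1}$ with the subsequent constraint at $q_{2i}$ through monotonicity of the Fréchet matching, which effectively pins peaks of $q$ to peaks of $s(R)$ and valleys to valleys; this block-level alignment is what turns the coordinate-wise conditions $l_i\leq p_i\leq r_i$ into a Fréchet bound.
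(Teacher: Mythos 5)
Your proposal is correct and takes essentially the same route as the paper: the forward direction is the same vertex-to-vertex (diagonal) matching justified by linearity/convexity, and your contrapositive case analysis for the reverse direction relies on exactly the structural facts the paper uses directly, namely the $6i$-offset block separation that confines the matched parameters to the edges around block $i$, and monotonicity of the matching to exclude the between-block edge. The only differences are presentational (direct argument versus contrapositive), and your incidental remark about a gap of at least $7$ should be read as the gap between the adjacent vertices $s_{2i}$ and $s_{2i+1}$ (the separation between the value ranges of consecutive blocks is only at least $3$, which is all your inequalities actually need).
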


\begin{proof}
    We use the reduction as described above. Hence, it remains to prove that
    $p\in R\ \Leftrightarrow d_{\text{F}}(q, s(R))\leq 1$.
    For all $i$, it follows by $l_i, r_i, p_i\in [0,1]$ that
    \begin{align*}
    &|q_{2i-1}-s_{2i-1}|\leq 1\ \Leftrightarrow\ |((p_i+ 2)+6i)-((r_i+1)+6i)|\leq 1\ \Leftrightarrow\  p_i\leq r_i \text{ and }\\
    &|q_{2i}-s_{2i}|\leq 1\ \Leftrightarrow\ |((p_i- 2)+6i)-((l_i-1)+6i)|\leq 1\ \Leftrightarrow\  p_i\geq l_i.
    \end{align*}
    If $p\in R$, then it holds that $|q_{2i-1}-s_{2i-1}|\leq 1$ and  ${|q_{2i}-s_{2i}|\leq 1}$ for all $i$, since $l_i\leq p_i\leq r_i$. Therefore, $\mathscr{C}=((1,1), (1, 2), (2,2), (2,3), (3, 3), (3, 4), \dotso, (2d-1, 2d), (2d, 2d))$ is a feasible sequence of cells in $F_1(q, s(R))$ because all cells are convex and boundary points of a cell belong to all neighboring cells. So, $d_{\text{F}}(q, s(R))\leq 1$.
    
    If $d_{\text{F}}(q, s)\leq 1$, then by the definition of the \Fm, for all points $q_{2i-1}$ and $q_{2i}$, there exist points $x_{2i-1}$ and $x_{2i}$ such that ${|q_{2i-1}-x_{2i-1}|\leq 1}$, $|q_{2i}-x_{2i}|\leq 1$, and $x_{2i-1}$ lies not after $x_{2i}$ on the time series $s(R)$. 
    By construction, it holds $s_{2k}-1<s_{2k-1}-1<q_{2i}<q_{2i-1}<s_{2l}+1<s_{2l-1}+1$ for all $k<i<l$.
    It holds that $|s_k-q_l|>1$ for $k\neq l$ by definition.
    Therefore, $x_{2i-1}, x_{2i}$ must lie on one of the following edges $\overline{s_{2i-2} s_{2i-1}}$, $\overline{s_{2i-1} s_{2i}}$, or $\overline{s_{2i} s_{2i+1}}$.
    By construction, it holds that $q_{2i-1}\in [6i+2, 6i+3]$ and $|x_{2i-1}-q_{2i-1}|\leq 1$. Hence, ${x_{2i-1}\in [6i+1, 6i+4]}$. Further, $q_{2i} \in [6i-2, 6i-1]$ and $|x_{2i}-q_{2i}|\leq 1$. Hence, $x_{2i}\in  [6i-3, 6i]$. Assume that $x_{2i-1}\in \overline{s_{2i} s_{2i+1}}$. Then since $x_{2i}$ lies after $x_{2i-1}$ on $s(R)$ it follows that $x_{2i}\in \overline{s_{2i} s_{2i+1}}$ and in particular $6i+1\leq x_{2i-1}\leq x_{2i}$. This leads to a contradiction to $x_{2i}\leq 6i$. In the same way, it follows that $x_{2i}\notin \overline{s_{2i-2} s_{2i-1}}$. So, $x_{2i-1}$ lies on $\overline{s_{2i-2} s_{2i-1}}$ or $\overline{s_{2i-1} s_{2i}}$ and $x_{2i}$ lies on $\overline{s_{2i-1} s_{2i}}$ or $\overline{s_{2i} s_{2i+1}}$. It holds that  $x_{2i-1}\leq s_{2i-1}$ and $x_{2i}\geq s_{2i}$, because $s_{2i-2}\leq s_{2i}\leq s_{2i-1}\leq s_{2i+1}.$
    It follows by $x_{2i-1}\leq s_{2i-1}\leq q_{2i-1}$ and $|x_{2i-1}-q_{2i-1}|\leq 1$ that
    $|q_{2i-1}-s_{2i-1}|= q_{2i-1}-s_{2i-1}\leq  q_{2i-1}-x_{2i-1}=|q_{2i-1}-x_{2i-1}| \leq 1.$ By the same argument, it follows that $|s_{2i}-q_{2i}|\leq 1$ because $q_{2i}\leq s_{2i}\leq x_{2i}$. 
    Therefore, $ l_i\leq p_i\leq r_i$ for all $i$, i.e., $p\in R$, which concludes the proof.
\end{proof}

The result in Theorem~\ref{t: reduction} together with the lower bound for bounded rectangle stabbing queries by Afshani, Arge and Larsern~\cite{AAL12} yields the following lower bound for the \FQ problem. 

\begin{corollary}
    Every data structure that solves the \FQ problem that operates on a pointer machine, and uses $n h$ space must use query time in ${\Omega(\log n(\log n/\log h)^{\lfloor t/2\rfloor -2}+k)}$, where $k$ is the size of the output (without duplicates) and $t=\min\{t_q, t_s\}$.
\end{corollary}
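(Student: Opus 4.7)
The plan is to chain the reduction of Theorem~\ref{t: reduction} with the pointer-machine lower bound of Afshani, Arge and Larsen~\cite{AAL12} for bounded rectangle stabbing. Set $d = \lfloor t/2 \rfloor$ where $t = \min\{t_q, t_s\}$. By Theorem~\ref{t: reduction}, any instance of $d$-dimensional bounded rectangle stabbing on $n$ rectangles can be transformed in linear time into an instance of the \FQ problem consisting of $n$ input time series of complexity $2d$, answered by a query time series also of complexity $2d$, and with distance threshold $\rho = 1$. Since $2d \leq t \leq \min\{t_q, t_s\}$, any data structure that solves the \FQ problem for stored curves of complexity at most $t_s$ and query curves of complexity at most $t_q$ is in particular able to answer this reduced instance (padding vertices if a strict equality on the complexity is enforced; this adds only $O(1)$ extra work per curve and does not affect the asymptotic bounds).

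Next, I would invoke the lower bound from \cite{AAL12}: any pointer-machine data structure for $d$-dimensional bounded rectangle stabbing using $nh$ space requires $\Omega(\log n(\log n/\log h)^{d-2}+k)$ query time, where $k$ is the output size. Suppose, for contradiction, that there were a pointer-machine data structure for the \FQ problem using $nh$ space and query time $o(\log n(\log n/\log h)^{\lfloor t/2\rfloor -2}+k)$. Because the transformation from rectangle stabbing to Fréchet queries is linear, preserves $n$, and preserves the output set bijectively (each containing rectangle corresponds to exactly one returned time series, with no duplicates), this would yield a pointer-machine data structure for bounded rectangle stabbing in dimension $d = \lfloor t/2 \rfloor$ with the same space and query-time bound, contradicting the Afshani--Arge--Larsen lower bound.

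The one subtle step, and the place I would spend the most care, is verifying that the reduction preserves the pointer-machine model and the output counting. The transformation itself is performed once during preprocessing and is linear, so the stored object has size $\Theta(n)$ in both formulations; each pointer-machine query for a rectangle stabbing point $p$ is simulated by constructing the query curve $q$ (of $O(t)$ size, i.e., constant) and then running the \FQ query, with every reported curve $s(R)$ interpreted as the corresponding rectangle $R$. By Theorem~\ref{t: reduction}, $p \in R$ iff $d_{\text{F}}(q,s(R)) \leq 1$, so the two output multisets coincide, and in particular their sizes $k$ agree, which is what the ``$+k$'' term of the lower bound refers to. This establishes the stated $\Omega(\log n(\log n/\log h)^{\lfloor t/2\rfloor -2}+k)$ lower bound on the query time of any \FQ data structure using $nh$ space.
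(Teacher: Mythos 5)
Your proposal is correct and follows essentially the same route as the paper: the paper obtains this corollary by combining Theorem~\ref{t: reduction} with the Afshani--Arge--Larsen pointer-machine lower bound for (bounded) rectangle stabbing in dimension $d=\lfloor t/2\rfloor$, exactly as you do. Your additional care about duplicate-free output and the pointer-machine simulation matches the paper's implicit reasoning and introduces no gap.
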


Given an instance of $d$-dimensional orthogonal range searching, we can construct the stored (resp. query) time series in the way as the query (resp. stored) time series were constructed in Theorem \ref{t: reduction} after scaling the instance such that all points are in $[0,1]^d$. Using this construction, it holds by the same arguments as in the proof of Theorem \ref{t: reduction} that $p\in R$ if and only if $d_F(q, s(p))\leq 1$. Therefore, we get the following corollary.

\begin{corollary}
    The $d$-dimensional orthogonal range searching can be solved with a data structure for the \FQ problem, where the stored time series as well as the query time series have complexity $2d$. The instance for the \FQ problem can be computed in linear time.
\end{corollary}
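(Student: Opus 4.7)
The plan is to mirror the reduction of Theorem~\ref{t: reduction} with the roles of the stored and query curves exchanged, and then invoke the symmetry of the continuous Fréchet distance. First I would apply an affine rescaling of the ambient space so that all input points of the orthogonal range searching instance fall in $[0,1]^d$, applying the same rescaling to each query rectangle so that containment is preserved. Then, for each input point $p=(p_1, \dotso, p_d)$, I would store the time series $s(p)=\langle s_1, \dotso, s_{2d}\rangle$ with $s_{2i-1}=(p_i+2)+6i$ and $s_{2i}=(p_i-2)+6i$, reusing the formula by which queries were constructed in Theorem~\ref{t: reduction}. For each query rectangle $R=[l_1,r_1]\times\cdots\times[l_d,r_d]$, I would build the query time series $q(R)=\langle q_1, \dotso, q_{2d}\rangle$ with $q_{2i-1}=(r_i+1)+6i$ and $q_{2i}=(l_i-1)+6i$, reusing the formula by which rectangles were stored in Theorem~\ref{t: reduction}. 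The data structure would then return all $s(p)$ with $d_F(q(R), s(p))\leq 1$.

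For correctness, I would show that $p\in R \Leftrightarrow d_F(q(R), s(p))\leq 1$. Since the continuous Fréchet distance is symmetric in its two arguments, this is equivalent to $p\in R \Leftrightarrow d_F(s(p), q(R))\leq 1$, which is exactly the equivalence established in Theorem~\ref{t: reduction} with $p$ now playing the query-point role and $R$ the stored-rectangle role. The coordinate-wise equivalences $|s_{2i-1}-q_{2i-1}|\leq 1 \Leftrightarrow p_i\leq r_i$ and $|s_{2i}-q_{2i}|\leq 1 \Leftrightarrow p_i\geq l_i$ are statements about two numerical sequences and do not depend on which is designated as the query; the forward direction reuses the same explicit feasible cell sequence; and the backward direction reuses the routing argument that forces each pair of coupling partners $x_{2i-1}, x_{2i}$ onto the zig-zag edges $\overline{s_{2i-2}s_{2i-1}}, \overline{s_{2i-1}s_{2i}}, \overline{s_{2i}s_{2i+1}}$.

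The time bound is immediate: each $s(p)$ and each $q(R)$ is an $\mathcal{O}(d)$ arithmetic transformation of the coordinates, so the whole \FQ instance is assembled in linear time. The main step that deserves scrutiny is the backward direction, because in Theorem~\ref{t: reduction} the edge-routing case analysis is phrased in terms of edges of the \emph{stored} curve, whereas after swapping those edges now live on the query curve $q(R)$. I expect no genuine obstacle here: both constructions produce curves of identical zig-zag shape, separated near coordinate $6i$ by the same spacing (only the $\pm 2$ and $\pm 1$ offsets switch places), so the disjointness of coordinate intervals used to restrict $x_{2i-1}, x_{2i}$ to the correct three candidate edges carries over once the two curves are interchanged in the argument.
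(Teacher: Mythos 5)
Your proposal is correct and matches the paper's own argument: the paper likewise scales the instance into $[0,1]^d$, stores the points as time series via the formula used for queries in Theorem~\ref{t: reduction}, builds query time series from the rectangles via the formula used for the stored curves, and concludes $p\in R \Leftrightarrow d_F(q, s(p))\leq 1$ by the same reasoning (which, as you note, is immediate from the symmetry of the Fréchet distance). No gaps beyond the level of detail the paper itself leaves implicit.
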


Chazelles \cite{Chazelle_1990} lower bound for orthogonal range searching provides to the following: 

\begin{corollary}
    Every data structure that solves the \FQ problem and uses ${\mathcal{O}(\log^c n+k)}$ query time, where $c$ is a constant, must use $\Omega(n(\log n/\log\log n)^{\lfloor t/2\rfloor-1})$ space, where $k$ is the size of the output (without duplicates) and $t=\min\{t_q, t_s\}$.
\end{corollary}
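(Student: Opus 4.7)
The plan is to combine the reduction in the preceding corollary with Chazelle's $\Omega(n(\log n/\log\log n)^{d-1})$ space lower bound for orthogonal range searching, as stated in Section~\ref{s:datastructure_techniques}. Both statements apply in the pointer machine model, so the lower bound will transfer through the reduction.

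First I would set $d=\lfloor t/2\rfloor$, where $t=\min\{t_q,t_s\}$, and consider an arbitrary $d$-dimensional orthogonal range searching instance on $n$ points. Applying the reduction from the preceding corollary produces an FQ instance with $n$ stored time series of complexity $2d$ and query time series of complexity $2d$, and the reduction itself runs in linear time. If $t_s>2d$ or $t_q>2d$, I would pad the produced time series to the exact target complexity by repeating the last vertex; since repeating a vertex leaves the image of the curve (and therefore its Fréchet distance to any other curve) unchanged, correctness of the reduction is preserved.

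Next, suppose for contradiction that the FQ problem on curves of complexities $t_q,t_s$ admits a pointer-machine data structure with query time $\mathcal{O}(\log^c n+k)$ and space $o(n(\log n/\log\log n)^{d-1})$. Composing this data structure with the reduction yields a pointer-machine data structure for $d$-dimensional orthogonal range searching with the same asymptotic space bound and with query time $\mathcal{O}(\log^c n+k)$: transforming the query point into a query curve is a constant-time operation, and the bijection between input points and stored curves guarantees that the output size parameter $k$ agrees in both problems (each reported curve corresponds to exactly one reported point, with no duplicates). This contradicts Chazelle's lower bound, yielding the claimed space bound $\Omega(n(\log n/\log\log n)^{\lfloor t/2\rfloor-1})$.

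The main obstacle I anticipate is bookkeeping rather than mathematical: one must check that the reduction is faithful to the pointer machine model and that padding curves to the exact prescribed complexity neither changes the Fréchet distance nor introduces spurious hits. Both points are essentially immediate from the construction, so I expect the proof to be short once they are addressed.
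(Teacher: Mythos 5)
Your proposal is correct and matches the paper's argument: the paper obtains this corollary directly by composing the orthogonal-range-searching-to-Fr\'echet-queries reduction (with $d=\lfloor t/2\rfloor$, so curves of complexity $2d\leq t$) with Chazelle's pointer-machine space lower bound, exactly as you do. Your extra remarks about padding to the exact complexity and the output size $k$ being preserved are the same routine bookkeeping the paper leaves implicit.
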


\section{Predicates for Evaluating the \Fm}\label{predicates}

In this section, we review the predicates used by Afshani and Driemel and how they enable the evaluation of the Fréchet distance in a data structure context. 

For this, we first recall the definition of the free space diagram from Alt and Godau \cite{CompFreDist}.
For polygonal curves $q: [1, t_q]\rightarrow \mathbb{R}^d$ and $s: [1, t_s]\rightarrow \mathbb{R}^d$ the \emph{free space diagram} $F_\rho(q, s)$ is a subset of $[1, t_q]\times[1, t_s]$, such that for all points $(x, y)\in F_\rho(q, s)$ the distance between $q(x)$ and $s(y)$ is at most~$\rho$. Refer to Figure \ref{fig:free space diagram} for an example. Formally,
\[F_\rho(q, s):=\{(x, y)\in [1, t_q]\times[1, t_s]\ |\ \|q(x)-s(y)\|_2\leq \rho\}.\]

\begin{lemma}[Alt and Godau \cite{CompFreDist}]\label{lAltGodau}
    The \F between polygonal curves $s$ and $q$ is at most $\rho$ if and only if there exists a path in $F_\rho(q, s)$ from $(1,1)$ to $(t_q, t_s)$ which is monotone in both coordinates. For such a path, we say it is  \emph{feasible}.
\end{lemma}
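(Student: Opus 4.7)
The plan is to establish the two directions separately, with the backward implication being a definitional unfolding and the forward implication requiring a compactness argument. For the backward (``if'') direction, given a continuous monotone path $\gamma\colon [0,1]\to F_\rho(q,s)$ with $\gamma(0)=(1,1)$ and $\gamma(1)=(t_q,t_s)$, I would write $\gamma(p)=(x(p),y(p))$. The monotonicity and endpoint assumptions on $\gamma$ make its coordinate functions admissible reparameterizations, so $h_q:=x\in \mathcal{F}_q$ and $h_s:=y\in\mathcal{F}_s$. Membership of $\gamma(p)$ in $F_\rho(q,s)$ translates directly into $\|q(h_q(p))-s(h_s(p))\|_2\le \rho$ for every $p\in[0,1]$, and plugging this into the definition of $d_{\text{F}}$ yields $d_{\text{F}}(q,s)\le\rho$.

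For the forward (``only if'') direction, assume $d_{\text{F}}(q,s)\le \rho$. I would pick a minimizing sequence $(h_q^{(n)},h_s^{(n)})\in\mathcal{F}_q\times\mathcal{F}_s$ satisfying $\max_p \|q(h_q^{(n)}(p))-s(h_s^{(n)}(p))\|_2\le \rho+\tfrac1n$, and then produce the required path as a limit. To make Arzel\`a--Ascoli applicable, I would first precompose each pair with a common continuous non-decreasing surjection of $[0,1]$ so that $(h_q^{(n)}(p)-1)+(h_s^{(n)}(p)-1) = (t_q+t_s-2)\,p$. Because both $h_q^{(n)}$ and $h_s^{(n)}$ are non-decreasing and sum to this linear function, each becomes Lipschitz with constant at most $t_q+t_s-2$, uniformly in $n$; admissibility, the image of the path in $[1,t_q]\times[1,t_s]$, and the $(\rho+\tfrac1n)$-bound are all preserved. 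Arzel\`a--Ascoli then delivers a uniformly convergent subsequence whose limits $h_q^*,h_s^*$ inherit continuity, monotonicity, and boundary conditions, and uniform continuity of the polygonal curves $q,s$ on their compact domains lets me pass the bound to the limit to obtain $\|q(h_q^*(p))-s(h_s^*(p))\|_2\le\rho$ for every $p$. The curve $p\mapsto(h_q^*(p),h_s^*(p))$ is then the desired feasible monotone path.

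The main obstacle I expect is the ``total progress'' reparameterization: the map $p\mapsto (h_q^{(n)}(p)-1)+(h_s^{(n)}(p)-1)$ can fail to be strictly increasing when both components are simultaneously constant, so its inverse is only a generalized inverse. I would handle this by collapsing every maximal interval on which both $h_q^{(n)}$ and $h_s^{(n)}$ are constant to a single parameter value; this changes neither the image of the traced path in $[1,t_q]\times[1,t_s]$ nor the pointwise distance along it. After this cleanup the uniform Lipschitz bound holds, Arzel\`a--Ascoli applies verbatim, and the remaining limiting step is routine.
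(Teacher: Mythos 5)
The paper itself gives no proof of this lemma --- it is imported verbatim as a classical result of Alt and Godau --- so there is no internal argument to compare against; judged on its own, your proof is correct and complete. The ``if'' direction is the standard unfolding: the coordinate functions of a continuous monotone path from $(1,1)$ to $(t_q,t_s)$ are admissible members of $\mathcal{F}_q$ and $\mathcal{F}_s$, and membership in $F_\rho(q,s)$ gives the distance bound. The ``only if'' direction is where real work is needed, since $d_F$ is an infimum that the initial pair need not attain, and your route --- normalize each near-optimal pair to constant total progress $(h_q(p)-1)+(h_s(p)-1)=(t_q+t_s-2)p$, deduce a uniform Lipschitz bound from monotonicity of both components, apply Arzel\`a--Ascoli, and pass the bound $\rho+\tfrac1n$ to the limit using continuity of $q$ and $s$ --- is sound. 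You also correctly identify the one delicate point, that the total-progress map is not invertible on intervals where both reparameterizations are simultaneously constant, and your fix works precisely because $h_q$ and $h_s$ are constant on exactly those intervals, so the collapsed compositions remain continuous (indeed Lipschitz) and neither the traced point set nor the distance bound changes; only degenerate single-vertex curves, where $t_q+t_s-2=0$, need to be set aside as trivial. By contrast, the classical treatment of Alt and Godau obtains the criterion through the structure of the free space diagram itself (convexity of the free space inside each cell and reachability across cell boundaries), which is the machinery this paper builds on for its feasible cell sequences; your compactness argument is more analytic and has the merit of spelling out the attainment-at-threshold step that terser accounts tend to gloss over.
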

We can decompose  the rectangle $[1, t_q]\times[1, t_s]$ into $(t_q-1)\cdot (t_s-1)$ cells such that the cell $C_{i j}=[i, i+1]\times [j, j+1]$ corresponds to the part in the free space diagram defined by the edges $\overline{q_i q_{i+1}}$ and $\overline{s_j s_{j+1}}$. By definition of the free space diagram, it follows that $C_{i j}\cap F_{\rho}(q, s)$ lies between two parallel lines. Therefore, we focus on the boundary of the cells $C_{i j}$.

\begin{figure}
    \centering
    \includegraphics{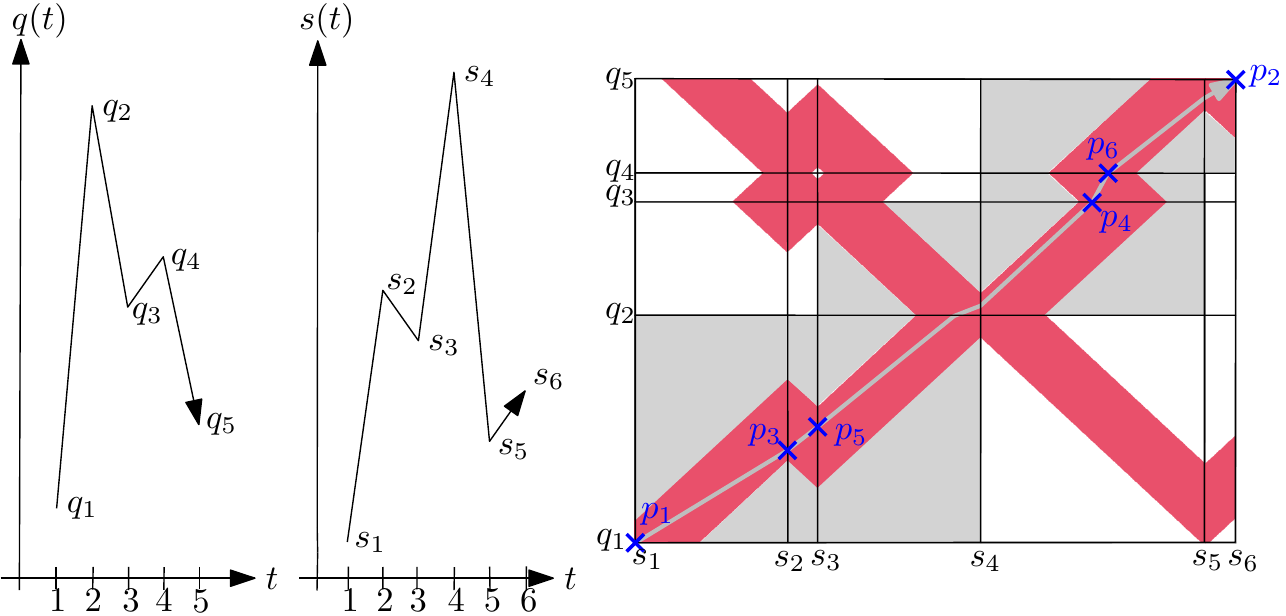}
    \caption{The free space diagram $F_{\rho}(q,s)$ of two time series with a feasible path trough a feasible sequence of cells $\mathscr C=((1,1), (1,2), (1, 3), (2, 3), (2, 4), (3, 4), (4, 4), (4, 5))$, which is drawn in grey. Predicates $(P_1), (P_2), (P_3(1,2)), (P_4(3,4)), (P_5(1,2,3))$ and $(P_6(3,4,4))$ are true, because the points $p_i$ are contained in the free space.}
    \label{fig:free space diagram}
\end{figure}

Our query algorithm will iterate over all possibilities of sequences of cells that a feasible path could traverse in the free space diagram. Therefore, we call a sequence of cells $\mathscr{C}=((i_1,j_1), \dotso,(i_t, j_t))$ \emph{valid}, if $i_1=1$, $j_1=1$, $i_k=t_q-1$, $j_k=t_s-1$, and for all $m<t$ either $i_{m+1}=i_{m}$ and $j_{m+1}=j_{m}+1$, or $i_{m+1}=i_{m}+1$ and $j_{m+1}=j_{m}$. The tuple~$(i, j)$ represents the cell $C_{i j}$. Further, a valid sequence of cells is called \emph{feasible} in $F_\rho(q, s)$, if there exists a feasible path in $F_\rho(q, s)$ that traverses exactly the cells in~$\mathscr{C}$. Refer to \cref{fig:free space diagram} for an example.

The following predicates due to Afshani and Driemel \cite{AfshaniDriemel18}  can be used to decide whether a valid sequence of cells is feasible in $F_\rho(q, s)$. \Cref{fig:free space diagram} visualizes the predicates. 

\begin{description}
    \item[$(P_1)$] \emph{(Endpoint (start))} This predicate is true iff $|s_1-q_1|\leq \rho$.
    \item[$(P_2)$] \emph{(Endpoint (end))} This predicate is true iff $|s_{t_s}-q_{t_q}|\leq \rho$.
    \item[$(P_3 (i, j))$] \emph{(Vertex of $s$ - edge of $q$)} This predicate is true iff $\exists\ p_3\in \overline{q_i q_{i+1}}$ s.t. $|p_3-s_j|\leq \rho$.
    \item[$(P_4 (i, j))$] \emph{(Vertex of $q$ - edge of $s$)} This predicate is true iff $\exists\ p_4\in \overline{s_j s_{j+1}}$ s.t. $|p_4-q_i|\leq \rho$.
    \item[$(P_5 (i, j, k))$] \emph{(Monotone in $q$)} This predicate is true iff $\exists\ p_3,p_5\in \overline{q_i q_{i+1}}$ s.t. $p_3$ lies not after~$p_5$ on the time series $q$ and $|p_3-s_j|\leq\rho$ and $|p_5-s_k|\leq\rho$.
    \item[$(P_6(i, l, j))$] \emph{(Monotone in $s$)} This predicate is true iff $\exists\ p_4,p_6\in \overline{s_j s_{j+1}}$ s.t. $p_4$ lies not after~$p_6$ on the time series $s$ and $|p_4-q_i|\leq\rho$ and $|p_6-q_l|\leq\rho$.
\end{description}


The following lemma verifies that the predicates can be used to test if the Fréchet distance between two curves is at most a given value. 

\begin{lemma}[Afshani and Driemel \cite{AfshaniDriemel18}] \label{lsequenceofcells}
    Let $\mathscr{C}=((i_1,j_1), (i_2,j_2),\dotso,(i_t, j_t))$ be a valid sequence of cells. Then $\mathscr{C}$ is feasible in $F_{\rho}(q, s)$ if and only if the following predicates defined by $q$, $s$ and $\rho$ are true:
    \begin{enumerate}
        \item[(i)] $(P_1)$ and $(P_2)$,
        \item[(ii)] $(P_3(i, j))$ if $(i, j-1),(i, j)\in \mathscr{C}$,
        \item[(iii)] $(P_4(i, j))$ if $(i-1, j),(i, j)\in \mathscr{C}$,
        \item[(iv)] $(P_5(i, j, k))$ if $(i, j-1), (i, k)\in \mathscr{C}$, and
        \item[(v)] $(P_6(i, l, j))$ if $(i-1, j), (l, j)\in \mathscr{C}$.
    \end{enumerate}
    We say that those predicates are induced by $\mathscr{C}$. 
\end{lemma}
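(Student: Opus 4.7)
The plan is to prove the biconditional in both directions, with the converse being the more delicate direction.

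For the forward direction ($\Rightarrow$), assume $\mathscr{C}$ is feasible and fix a monotone feasible path $\pi\colon [0,1] \to F_{\rho}(q,s)$ that traverses the cells of $\mathscr{C}$ in order from $(1,1)$ to $(t_q, t_s)$. Predicates $(P_1)$ and $(P_2)$ follow immediately from $\pi(0),\pi(1) \in F_{\rho}(q,s)$. For $(P_3(i,j))$ required by the condition $(i, j-1), (i, j) \in \mathscr{C}$, the path $\pi$ crosses the horizontal edge $[i, i+1] \times \{j\}$ at some point $(x_3, j) \in F_{\rho}(q,s)$, which yields $p_3 = q(x_3) \in \overline{q_i q_{i+1}}$ with $|p_3 - s_j| \leq \rho$; predicate $(P_4(i,j))$ is extracted analogously from a vertical-edge crossing. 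For $(P_5(i, j, k))$, the structure of a valid sequence forces the entire column-$i$ run of $\mathscr{C}$ to contain both $(i, j-1)$ and $(i, k)$, so $\pi$ visits column $i$ at $y$-coordinates $j$ and $k$; the monotonicity of $\pi$ in its first coordinate yields the required $p_3, p_5 \in \overline{q_i q_{i+1}}$ in the correct order. Predicate $(P_6)$ is symmetric.

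For the converse ($\Leftarrow$), assume all required predicates hold and construct a feasible path $\pi$ by prescribing its intersection with each cell-boundary of $\mathscr{C}$: the witness of $(P_3(i,j))$ is placed at each vertical-transition boundary, the witness of $(P_4(i,j))$ at each horizontal-transition boundary, and $(1,1),(t_q,t_s)$ at the endpoints. Within each cell $C_{i j}$, the intersection $F_{\rho}(q,s) \cap C_{i j}$ is the intersection of the rectangle with a strip between two parallel lines and is therefore convex; this allows any two chosen boundary points of the same cell to be connected by a monotone line segment lying in $F_{\rho}(q,s)$.

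The main obstacle is to ensure that these locally chosen boundary points can be selected jointly to form a globally monotone path. Within a run of consecutive cells in a single column $i$, the $x$-coordinates of the crossings at successive horizontal transitions must be non-decreasing, and the individual $(P_3)$ witnesses may not already satisfy this. Here the predicates $(P_5(i, j, k))$ are exactly what is needed: they assert that pairwise-monotone witnesses exist at any two rows of the run, and combined with the fact that for each $j$ the feasible slice $\{x \in [i, i+1] : |q(x) - s_j| \leq \rho\}$ is an interval, this pairwise guarantee can be upgraded to a fully monotone choice of witnesses across the entire run. A symmetric argument with $(P_6)$ handles runs of consecutive cells in a single row, and concatenating the resulting monotone cell-segments produces the desired feasible path.
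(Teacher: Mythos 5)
The paper itself gives no proof of this lemma; it is imported from Afshani and Driemel, so there is no in-paper argument to compare against, and your proposal must stand on its own. It does: the forward direction correctly reads the witnesses off the boundary crossings of a monotone feasible path, and the converse correctly builds a path from boundary witnesses using convexity (and hence monotone connectability) of the free space inside each cell, with the corner cells causing no trouble since their entry and exit points are automatically ordered. The one step you leave implicit---upgrading the pairwise guarantees of $(P_5)$ and $(P_6)$ to a globally monotone choice of crossing points along a column or row run---is indeed valid for exactly the reason you give: each slice $I_r$ is an interval, the predicates say $\min I_a\leq \max I_b$ for all $a<b$ in the run, and choosing at the $r$-th crossing the running maximum $\max_{c\leq r}\min I_c$ yields a non-decreasing selection that stays inside each $I_r$. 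Spelling out that selection rule would close the only remaining informality; otherwise the argument is complete and matches the standard proof of this statement.
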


Afshani and Driemel show that, for a given query, a fixed assignment of truth values to the set of all predicates defines a semi-algebraic set. This set contains all curves for which the predicates yield the given truth assignment. A query to the data structure then corresponds to a finite union of semi-algebraic range queries for which the truth assignments yield a valid sequence of cells. 

In our paper, we modify this approach. Instead of fixing the truth assignment to all predicates, we only fix a combinatorial path in the free space diagram (that is a valid sequence of cells) and we consider the predicates that are induced by it. This results in potential duplicates in the query output as an input curve may have different combinatorial paths in the free space diagram with the query. However, the overall number of elements in the output only changes by a constant factor as long as the complexity of the input and query is constant.

\section{Simplification of the Predicates}\label{s: lemmas}

Given a sequence of cells $\mathscr C$, we want to find intervals $I_1, \dotso, I_{t_q}$ defined by a stored time series $s$ such that $\mathscr C$ is feasible in $F_\rho(q,s)$ if and only if $q_i\in I_i$ for all $i$, where $q=\langle q_1, \dotso, q_{t_q}\rangle$ is a time series with some additional properties. The intervals will be defined using the predicates. Lemma \ref{lsequenceofcells} shows which predicates need to be true such that $\mathscr C$ is feasible in $F_\rho(q,s)$. For the endpoint and vertex-edge predicates ($(P_1), (P_2), (P_3)$ and $(P_4)$), the needed intervals follow easily:

\begin{observation}\label{OP1234}
    Let $q=\langle q_1,\dotso, q_{t_q}\rangle$ and $s=\langle s_1,\dotso, s_{t_s}\rangle$ be two time series. Then the following holds for the predicates in the free space diagram $F_{\rho}(q, s)$:
    \begin{enumerate}
        \item [{(i)}] $(P_1)\text{ is true }\Leftrightarrow q_1\in [ s_1-\rho, s_1+\rho]$,
        \item [{(ii)}] $(P_2)\text{ is true } \Leftrightarrow q_{t_q}\in [s_{t_s}-\rho, s_{t_s}+\rho]$,
        \item [{(iii)}] $(P_3(i, j)) \text{ is true } \Leftrightarrow \text{ if } q_i\leq q_{i+1}:\ q_i\leq s_j+\rho \text{ and } q_{i+1}\geq s_j -\rho$ and \\ \hspace*{3.2 cm} if $q_i\geq q_{i+1}: \ q_i\geq s_j -\rho \text{ and } q_{i+1}\leq s_j+\rho$,
        \item [{(iv)}] $(P_4(i, j)) \text{ is true }  \Leftrightarrow q_i\in [\min \{s_j-\rho, s_{j+1}-\rho\},\max\{s_j+\rho, s_{j+1}+\rho\}]$.
    \end{enumerate}
\end{observation}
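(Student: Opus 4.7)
My plan is to treat the four parts as direct unpackings of the predicate definitions into one-dimensional interval arithmetic, using the fact that each time-series vertex and each edge of a time series is a point (respectively interval) on the real line.

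For (i) and (ii), the definitions of $(P_1)$ and $(P_2)$ are literally $|s_1-q_1|\leq\rho$ and $|s_{t_s}-q_{t_q}|\leq\rho$, so I simply rewrite each as membership in the $\rho$-ball around $s_1$, respectively $s_{t_s}$, on the real line. No case analysis is needed.

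For (iii), I observe that the edge $\overline{q_i q_{i+1}}$ viewed as a subset of $\mathbb{R}$ is the closed interval $[\min(q_i,q_{i+1}),\max(q_i,q_{i+1})]$ and that the $\rho$-neighborhood of the vertex $s_j$ is the closed interval $[s_j-\rho,s_j+\rho]$. The predicate $(P_3(i,j))$ asserts exactly that these two closed intervals intersect, which on the real line is equivalent to the left endpoint of each interval being no larger than the right endpoint of the other. I would then do the two-way case split on whether $q_i\leq q_{i+1}$ or $q_i\geq q_{i+1}$ to identify which of $q_i,q_{i+1}$ plays the role of the left endpoint and which plays the right endpoint; the two inequalities in each case of the statement are precisely what the non-empty-intersection condition reduces to. For (iv), I would do the symmetric computation with the roles reversed: $q_i$ is now a point and $\overline{s_j s_{j+1}}$ is the interval. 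The set of $q_i$ within distance $\rho$ of this interval is its $\rho$-inflation, which equals $[\min(s_j,s_{j+1})-\rho,\,\max(s_j,s_{j+1})+\rho]$, matching the formulation in the statement (after pulling $\rho$ inside the $\min$ and $\max$).

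I do not expect any real obstacle here; the observation is a bookkeeping step that replaces the existentially quantified formulation of the predicates with explicit axis-parallel conditions on the coordinates of $q$. The only care needed is in (iii), where the asymmetric form of the condition depends on the orientation (forward/backward) of the edge $\overline{q_iq_{i+1}}$ relative to the real line — a distinction that foreshadows the forward/backward number machinery developed later in \cref{s: lemmas}, but which at this stage is handled by a single two-case analysis.
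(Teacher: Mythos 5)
Your proposal is correct: all four parts follow by exactly the one-dimensional interval arithmetic you describe (intervals intersect iff each left endpoint is at most the other's right endpoint, and the $\rho$-neighborhood of an interval is its $\rho$-inflation), and this is precisely the routine unpacking of the predicate definitions that the paper leaves implicit, since it states this as an observation without proof. No gap or deviation to report.
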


The next lemma defines the intervals needed such that the monotone in $q$ predicate $(P_5(i, j, k))$ is true and is visualized in Figure \ref{fig:predicateP5}. 

\begin{figure}\centering
    \includegraphics[page=1]{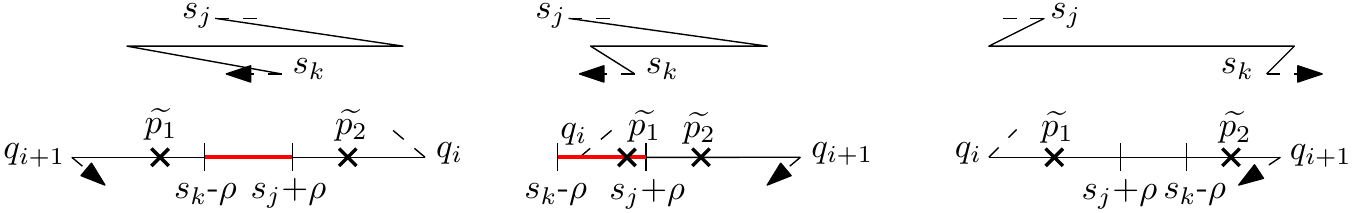}
    \caption{The left and the middle show Case (i) of Lemma \ref{lP5}. Here, $[s_j-\rho, s_j+\rho]\cap[s_k-\rho, s_k+\rho]$ is marked in red. The right visualizes Case (ii) of Lemma \ref{lP5} and Case (iii) is Case (ii) mirrored.}
    \label{fig:predicateP5}
\end{figure}

\begin{lemma}\label{lP5}
    Let $q=\langle q_1,\dotso, q_{t_q}\rangle$ and $s=\langle s_1,\dotso, s_{t_s}\rangle$ be two time series, then the monotone in $q$ predicate $(P_5(i, j, k))$ is true if and only if the vertex of $s$ - edge of $q$ predicates $(P_3(i, j))$ and $(P_3(i, k))$ are true and one of the following holds
        \begin{enumerate}
            \item[(i)] $ |s_j-s_k|\leq 2\rho$, or
            \item [(ii)] $|s_j-s_k|>2\rho$ and $s_j\leq s_k$ and $q_i\leq s_j+\rho$ and $q_{i+1}\geq s_k-\rho$, or
            \item [(iii)] $|s_j-s_k|>2\rho$ and $s_j> s_k$ and $q_{i}\geq s_j-\rho$ and $q_{i+1}\leq s_k+\rho$.
        \end{enumerate}
\end{lemma}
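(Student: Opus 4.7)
The plan is to prove both directions by case analysis, exploiting the fact that $(P_5(i,j,k))$ asks for witnesses $p_3,p_5$ on the edge $\overline{q_iq_{i+1}}$ that individually witness $(P_3(i,j))$ and $(P_3(i,k))$ respectively, plus a monotonicity constraint. The key geometric observation is that ``not after on $q$'' translates to an ordering on the real line: if $q_i\leq q_{i+1}$ it forces $p_3\leq p_5$ in value, and if $q_i\geq q_{i+1}$ it forces $p_3\geq p_5$. Thus the question reduces to a one-dimensional interval intersection argument.

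For the forward direction, assume $(P_5(i,j,k))$ with witnesses $p_3,p_5$. The predicates $(P_3(i,j))$ and $(P_3(i,k))$ hold immediately since $p_3$ and $p_5$ themselves witness them. If $|s_j-s_k|\leq 2\rho$, we are done via case (i). Otherwise the intervals $[s_j-\rho,s_j+\rho]$ and $[s_k-\rho,s_k+\rho]$ are disjoint. In the subcase $s_j\leq s_k$ we get $p_3\leq s_j+\rho<s_k-\rho\leq p_5$, so $p_3<p_5$ strictly in value; monotonicity then forces $q_i\leq q_{i+1}$, and the bounds $q_i\leq p_3\leq s_j+\rho$ and $q_{i+1}\geq p_5\geq s_k-\rho$ give case (ii). The subcase $s_j>s_k$ is symmetric and yields case (iii).

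For the reverse direction, I would handle each case separately. In case (i), the intervals overlap in $[\max(s_j,s_k)-\rho,\min(s_j,s_k)+\rho]$, and I claim the edge $\overline{q_iq_{i+1}}$ meets this intersection. This follows from $(P_3(i,j))$ and $(P_3(i,k))$ by a short check in both edge orientations: the four inequalities from the two $(P_3)$ predicates, combined with $|s_j-s_k|\leq 2\rho$, imply that the maximum of the lower bounds does not exceed the minimum of the upper bounds. Picking any point $p$ in this intersection and setting $p_3=p_5=p$ satisfies $(P_5(i,j,k))$ trivially. In case (ii), the stated conditions $q_i\leq s_j+\rho$ and $q_{i+1}\geq s_k-\rho$ together with $s_j+\rho<s_k-\rho$ immediately imply $q_i<q_{i+1}$, so the edge is traversed in the increasing direction. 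I choose $p_3$ in the non-empty set $\overline{q_iq_{i+1}}\cap[s_j-\rho,s_j+\rho]$ and $p_5$ in $\overline{q_iq_{i+1}}\cap[s_k-\rho,s_k+\rho]$, which are non-empty by $(P_3(i,j))$ and $(P_3(i,k))$; interval disjointness then gives $p_3\leq s_j+\rho<s_k-\rho\leq p_5$, matching the required monotonicity. Case (iii) is entirely symmetric.

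The main subtlety is keeping the two possible orientations of the edge $\overline{q_iq_{i+1}}$ straight, since Observation \ref{OP1234} states $(P_3(i,j))$ differently in the two cases. The clean way to sidestep a proliferation of subcases is to observe that in cases (ii) and (iii) the stated numerical conditions on $q_i$ and $q_{i+1}$ already pin down the orientation, so no additional casework on the sign of $q_{i+1}-q_i$ is needed there; only case (i) requires handling both orientations, and both reduce to the same one-dimensional intersection check.
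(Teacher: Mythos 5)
Your proposal is correct and follows essentially the same route as the paper: the forward direction reads off $(P_3(i,j))$, $(P_3(i,k))$ from the witnesses and, in the disjoint-interval case, uses monotonicity to pin down the edge orientation and the bounds on $q_i, q_{i+1}$; the reverse direction picks a common witness in the (nonempty) intersection in case (i) and uses the forced increasing/decreasing orientation in cases (ii)/(iii). The only cosmetic difference is that in case (i) you verify the edge meets $[\max(s_j,s_k)-\rho,\min(s_j,s_k)+\rho]$ by direct interval arithmetic, whereas the paper argues via the two $(P_3)$ witnesses and containment of the intersection in the segment between them — both are the same one-dimensional intersection argument.
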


\begin{proof}
    Assume $(P_5(i, j, k))$ to be true. Then there exist points $p_1, p_2\in \overline{q_{i} q_{i+1}}$ such that $p_1$ lies not after $p_2$ on the time series $q$ and $p_1\in [s_j-\rho, s_j+\rho]$, $p_2\in [s_k-\rho, s_k+\rho]$. Hence, $(P_3(i,j))$ and $(P_3(i, k))$ are true. In addition, if $|s_j-s_k|>2\rho$ and $s_j\leq s_k$, it holds that $q_i\leq p_1\leq s_j+\rho<s_k-\rho\leq p_2\leq q_{i+1}$. Further, if $|s_j-s_k|>2\rho$ and $s_j> s_k$, it holds that $q_{i+1}\leq p_2\leq s_k+\rho<s_j-\rho\leq p_1\leq q_{i}$.
    
    It remains to prove the other direction. Let $(P_3(i, j))$ and $(P_3(i, k))$ be true. Then, there exist $\widetilde{p_1}\in \overline{q_i q_{i+1}}$ and $\widetilde{p_2}\in \overline{q_i q_{i+1}}$ such that $|\widetilde{p_1}-s_j|\leq\rho$ and $|\widetilde{p_2}-s_k|\leq\rho$.\\
    {Case $(i)$:} Let $|s_j-s_k|\leq2\rho$. We can set $p_1=p_2=\widetilde{p_1}$ if ${\widetilde{p_1}\in[s_j-\rho, s_j+\rho]\cap[s_k-\rho, s_k+\rho]}$. The same holds if $\widetilde{p_2}\in[s_j-\rho, s_j+\rho]\cap[s_k-\rho, s_k+\rho]$. Otherwise, $\emptyset\neq[s_j-\rho, s_j+\rho]\cap[s_k-\rho, s_k+\rho]\subseteq \overline{\widetilde{p_1}\widetilde{p_2}}\subseteq \overline{q_i q_{i+1}}$ and we can set $p_1=p_2$ as any point in this intersection. In each of those cases, $(P_5(i, j, k))$ is true. \\   
    {Case $(ii)$:} Let $|s_j-s_k|>2\rho$, $s_j\leq s_k$, $q_i\leq s_j+\rho$ and $q_{i+1}\geq s_k-\rho$. Then it holds that $\widetilde{p_1}<\widetilde{p_2}$ and $q_i<q_{i+1}$ because $s_j+\rho<s_k-\rho$. Therefore, $\widetilde{p_1}$ lies before $\widetilde{p_2}$ on $\overline{q_i q_{i+1}}$ and we can simply set $p_1=\widetilde{p_1}$ and $p_2=\widetilde{p_2}$. Those points have the required properties in the definition of $(P_5(i, j, k))$. Symmetrically, in Case $(iii)$ it holds that $(P_5(i, j, k))$ is true.
\end{proof}
To determine the truth value of the monotone in $s$ predicates $(P_6)$, we define the \emph{forward} and \emph{backward numbers} $f_i(q)$ and $b_i(q)$. Refer to Figure \ref{fig:f and b} as an example.

\begin{definition}[forward and backward numbers]\label{d: forward number}
    For a time series $q= \langle q_1,\dotso, q_{t_q}\rangle$ and $i\in~\{1, \dotso, t_q\}$, we denote by the \emph{forward number} $f_i(q)$ the highest number such that $\langle q_i-\rho, q_{f_i(q)}+\rho\rangle$ is oriented forward and its \F to the time series $\langle q_i,\dotso, q_{f_i(q)}\rangle$ is at most $\rho$, i.e., 
    \[f_i(q):= \max\{k\in \{i,\dotso, t_q\} \ \ |\ d_{\text{F}}(\langle q_i,\dotso, q_k\rangle, \langle q_i-\rho, q_k+\rho\rangle)\leq \rho \text{ and } q_i-\rho\leq q_k+\rho\}\]
    and by the \emph{backward number} $b_i(q)$ the highest number such that $\langle q_i+\rho, q_{b_i(q)}-\rho\rangle$ is oriented backward and its \F to the time series $\langle q_i,\dotso, q_{b_i(q)}\rangle$ is at most $\rho$, i.e., 
    \[b_i(q):= \max\{k\in \{i,\dotso, t_q\} \ \ |\ d_{\text{F}}(\langle q_i,\dotso, q_k\rangle, \langle q_i+\rho, q_k-\rho\rangle)\leq \rho  \text{ and } q_i+\rho\geq q_k-\rho\}.\]
\end{definition}


\begin{figure}
    \begin{center}
        \includegraphics[page=12]{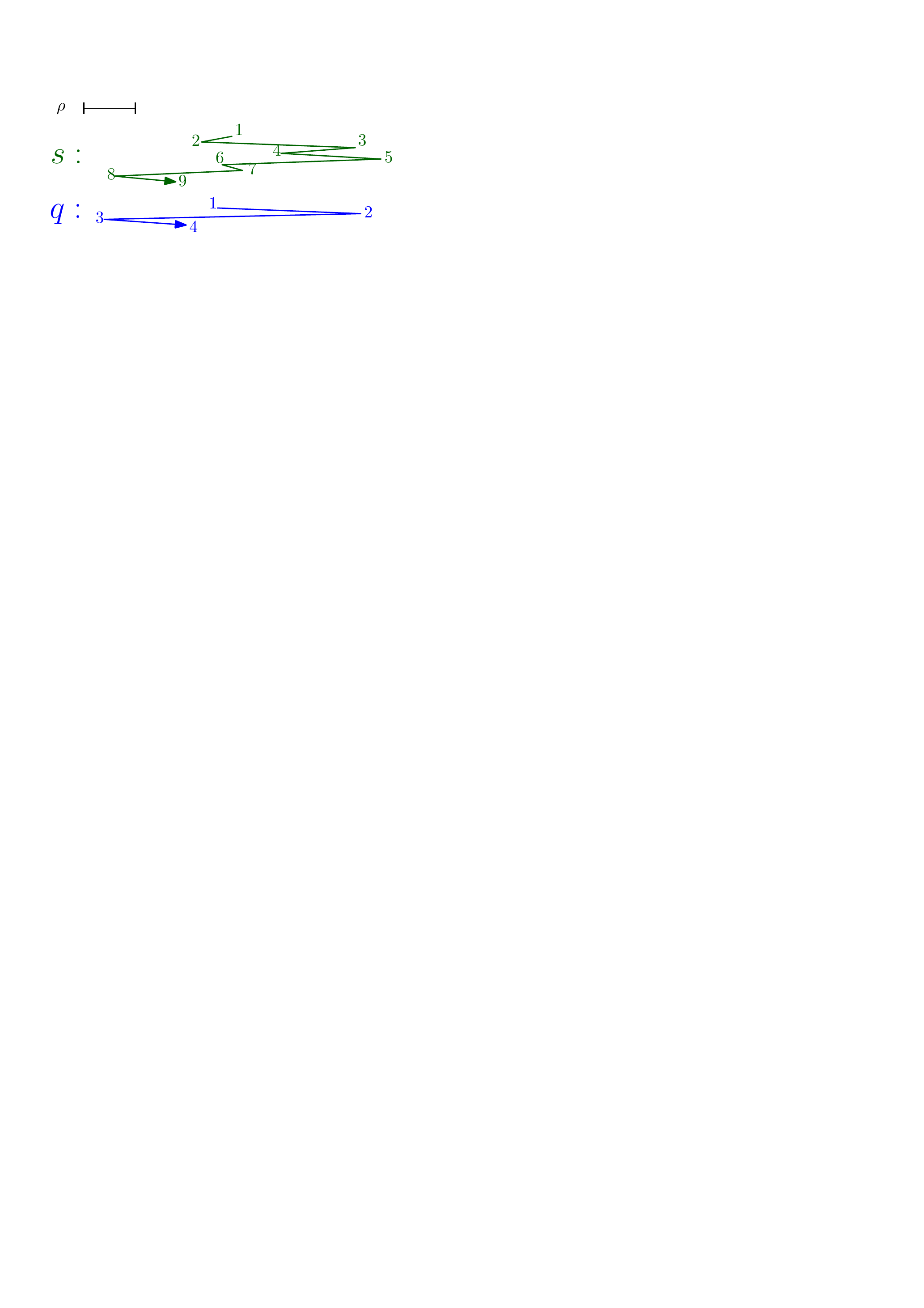}
    \end{center}
    \caption{Illustration of the values $f_2(s)=5$ and $b_5(s)=9$ for a time series $s$.} 
    \label{fig:f and b}
\end{figure}

\begin{observation} \label{o: f and b}
    For all $ i\leq x\leq  f_i(q)$, it holds that $d_{\text{F}}(\langle q_i, \dotso, q_x\rangle, \langle q_i-\rho, q_x+\rho\rangle)\leq \rho$ and $q_i-\rho\leq q_x+\rho$. Respectively, for $b_i(q)$.
\end{observation}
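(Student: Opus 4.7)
The plan is to truncate the witness matching guaranteed by the definition of $f_i(q)$. Set $k := f_i(q)$ and let $(h_p, h_\sigma)$ be continuous non-decreasing reparameterizations witnessing $d_{\text{F}}(\langle q_i, \dotso, q_k\rangle, \langle q_i-\rho, q_k+\rho\rangle) \leq \rho$. For any $x$ with $i\leq x\leq k$, let $t^* \in [0,1]$ be the first time at which $h_p(t^*) = x-i+1$ (so the curve-walker just reaches the vertex $q_x$), and let $y^*$ denote the position on the segment at that moment. The matching bound gives $|q_x - y^*|\leq \rho$, hence $y^* \leq q_x+\rho$; combining with $y^* \geq q_i-\rho$, which holds because $y^*$ lies on the segment $[q_i-\rho, q_k+\rho]$, yields both the inequality $q_i-\rho \leq q_x+\rho$ (the second claim) and the containment $y^* \in [q_i-\rho, q_x+\rho]$.

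Next, I would assemble a valid matching between $\langle q_i, \dotso, q_x\rangle$ and $\langle q_i-\rho, q_x+\rho\rangle$ in two phases. On $[0, t^*]$ I reuse the restriction of the original matching: by monotonicity of $h_p$ the curve-walker stays at parameters $\leq x-i+1$, so visits only the subcurve $\langle q_i, \dotso, q_x\rangle$, while by monotonicity of $h_\sigma$ the segment-walker traces $[q_i-\rho, y^*] \subseteq [q_i-\rho, q_x+\rho]$. Pointwise distances on this interval are inherited from the witness matching and therefore remain at most $\rho$. On $[t^*, 1]$ I hold the curve-walker fixed at $q_x$ and let the segment-walker traverse $[y^*, q_x+\rho]$ linearly; since $y^* \geq q_x-\rho$, every point in this interval is within $\rho$ of $q_x$. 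Concatenating the two phases gives a continuous non-decreasing matching from the start to the end of both curves with width at most $\rho$, which establishes the first claim. The statement for $b_i(q)$ follows verbatim by applying the same argument to the backward-oriented segment $\langle q_i+\rho, q_{b_i(q)}-\rho\rangle$.

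The only mildly delicate point is the hand-off at time $t^*$: one must verify that the segment-walker's position there lies inside the target segment $[q_i-\rho, q_x+\rho]$ and simultaneously within $\rho$ of $q_x$, which is exactly what the two inequalities $|q_x-y^*|\leq \rho$ and $y^* \geq q_i-\rho$ provide. Everything else is routine manipulation of Fréchet reparameterizations, and no extra hypothesis beyond $x\leq f_i(q)$ is needed.
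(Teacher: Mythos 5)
Your proof is correct and follows essentially the same route as the paper: both restrict the witness matching for $f_i(q)$ to the prefix ending at the vertex $q_x$, observe that the matched position for $q_x$ lies in $[q_i-\rho,\,q_x+\rho]$ (which yields the orientation claim $q_i-\rho\leq q_x+\rho$), and then close off the matching on the shortened segment $\langle q_i-\rho, q_x+\rho\rangle$. The paper phrases this via the vertex-matched points and convexity of the free-space cells, while you truncate the continuous reparameterization and append a final sweep of the segment, but the underlying argument is the same.
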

\begin{proof}
    By the definition of the \F and $f_i(q)$, there exist points $q_i-\rho\leq p_i\leq p_{i+1}\leq \dotso\leq p_x\leq q_x+\rho$ such that $|p_j-q_j|\leq \rho$ for all $j$. Further, since the free space in every cell is convex, the statement follows.
\end{proof}
 
The next lemma shows how the forward and backward numbers can be used to determine values of the monotone in $s$ predicates $(P_6)$. To decide whether a valid sequence of cells is feasible or not in $F_\rho(q,s)$, we need predicate $(P_6(i, l, j))$ to be true only if we also need all predicates $(P_6(x, y, j))$ to be true with $i\leq x < y\leq l$ by \cref{lsequenceofcells}. 

\begin{lemma}\label{lP6}
    Let $q=\langle q_1,\dotso, q_{t_q}\rangle$ and $s=\langle s_1,\dotso, s_{t_s}\rangle $ be two time series, ${i, l\in \{1,\dotso,t_q\}}$ with $i<l$ and $j\in \{1,\dotso,t_s-1\}$.
    If $s_j\leq s_{j+1}$, then \[(P_6(x, y, j))\text{ is true }\forall i\leq x< y\leq l\   \Leftrightarrow\  f_i(q)\geq l \text{ and\ }(P_4(x, j))\text{ is true for all } i\leq x\leq l\] and if $s_j\geq s_{j+1}$, then \[(P_6(x, y, j))\text{ is true }\forall i\leq x < y \leq l\  \Leftrightarrow\  b_i(q)\geq l \text{ and\ }(P_4(x, j))\text{ is true for all }  i\leq x\leq l.\]
\end{lemma}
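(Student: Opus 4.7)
The plan is to reduce both directions of the equivalence to a single intermediate condition: the existence of a non-decreasing sequence $p^i \leq p^{i+1} \leq \ldots \leq p^l$ of points on the edge $\overline{s_j s_{j+1}}$ with $|p^x - q_x| \leq \rho$ for each $x \in \{i, \ldots, l\}$. For the case $s_j \leq s_{j+1}$, I will show that such a sequence exists iff $(P_6(x, y, j))$ is true for all $i \leq x < y \leq l$, and, independently, iff $f_i(q) \geq l$ together with $(P_4(x, j))$ holding for all $i \leq x \leq l$. The case $s_j \geq s_{j+1}$ is symmetric, with the edge traversed in the reverse direction and $f_i(q)$ replaced by $b_i(q)$.

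The first equivalence (monotone sequence $\Leftrightarrow$ pairwise $(P_6)$) is immediate in one direction: any such sequence yields witnesses $p^x, p^y$ for every pair. For the converse, I use a greedy construction. Let $I_x = [\max(s_j, q_x - \rho),\ \min(s_{j+1}, q_x + \rho)]$, which is nonempty exactly when $(P_4(x, j))$ holds (and each pairwise $(P_6(x, y, j))$ trivially implies both $(P_4(x,j))$ and $(P_4(y,j))$). Setting $p^i = \min(I_i)$ and $p^{x+1} = \max(p^x, \min(I_{x+1}))$ yields a monotone sequence; the bound $p^x \leq \max(I_x)$ follows by induction, and the pairwise $(P_6)$ predicates supply the needed inequalities $q_y - \rho \leq q_x + \rho$ for $y \leq x$ that prevent the running maximum from overshooting $\max(I_x)$.

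For the second equivalence, I use the standard characterization of the Fréchet distance from a polygonal curve to a segment: $d_F(\langle q_i, \ldots, q_l \rangle, \langle a, b \rangle) \leq \rho$ with $a \leq b$ is equivalent to the existence of monotone $r_i = a \leq r_{i+1} \leq \ldots \leq r_l = b$ with $|r_x - q_x| \leq \rho$ (by pairing vertices through the Fréchet reparameterization, and using that linear interpolation preserves pointwise distance). Given a monotone sequence $(p^x)$ on $[s_j, s_{j+1}]$, I obtain an $f_i(q) \geq l$ witness by setting $r_i = q_i - \rho$, $r_l = q_l + \rho$, and $r_x = p^x$ for the intermediate indices; the sandwich $q_i - \rho \leq p^i \leq p^x \leq p^l \leq q_l + \rho$ both places the $r_x$ in the correct interval and yields the auxiliary inequality $q_i - \rho \leq q_l + \rho$ from \cref{d: forward number}. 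Conversely, given an $f_i(q) \geq l$ witness $(r_x)$, I clip it to the edge by $p^x = \max(s_j, \min(s_{j+1}, r_x))$; monotonicity is preserved by clipping.

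The main technical step is verifying $|p^x - q_x| \leq \rho$ after clipping. If $r_x < s_j$, then $q_x \leq r_x + \rho < s_j + \rho$, while $(P_4(x,j))$ combined with $s_j \leq s_{j+1}$ forces $q_x \geq s_j - \rho$, so $|s_j - q_x| \leq \rho$; the case $r_x > s_{j+1}$ is symmetric. This is the only place where $(P_4)$ is essential, and it explains why it appears as a hypothesis in the lemma even though the pairwise $(P_6)$ statement already implies it. Assembling these three equivalences gives the two stated biconditionals, and repeating the argument with the orientation of the edge reversed and $f_i(q)$ replaced by $b_i(q)$ handles the case $s_j \geq s_{j+1}$.
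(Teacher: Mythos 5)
Your proof is correct and follows essentially the same route as the paper: both directions come down to transferring monotone witness points between the edge $\overline{s_j s_{j+1}}$ and the segment $\langle q_i-\rho, q_l+\rho\rangle$ by clipping, with $(P_4)$ and the inequalities of the form $q_x-q_y\leq 2\rho$ extracted from the pairwise predicates controlling the clipped points. The only real difference is that your greedy construction of the monotone sequence makes explicit the pairwise-to-global matching step that the paper invokes implicitly when it asserts that a violating pair of vertices $q_x, q_y$ must exist whenever $d_{\text{F}}(\langle q_i,\dotso,q_l\rangle, \langle q_i-\rho, q_l+\rho\rangle)>\rho$.
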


\begin{proof}
    We discuss the case that $s_j\leq s_{j+1}$. The other case can be proven in the same way. 
    
    Let $(P_6(x, y, j))$ be true for all $i\leq x<y\leq l$. Note that by definition of $(P_6)$, it holds that the predicates $(P_4(x, j))$ are true for all $i\leq x\leq l$. It remains to prove $f_i(q)\geq l$.
    Let $i\leq x<y\leq l$. Since $(P_6(i, x, j))$ is true there exist $p_4,p_6\in \overline{s_j s_{j+1}}$ such that $p_4\leq p_6$ and $p_i\leq p_4+\rho$ and $q_x\geq p_6-\rho$. Therefore, it follows that $q_i-q_x\leq p_4+\rho-(p_6-\rho)\leq 2\rho$. In the same way, we get that $q_x-q_l\leq2\rho$. In particular, this implies that $\langle q_i-\rho, q_l+\rho\rangle$ is a forward edge. Assume for the sake of a contradiction that $d_{\text{F}}(\langle q_i, \dotso, q_l\rangle, \langle q_i-\rho, q_l+\rho\rangle)> \rho$. Then, there must exist two vertices $q_x, q_y$ such that there are no two points $ p_1, p_2\in [q_i-\rho, q_l+\rho]$ such that $p_1\leq p_2$, $|p_1-q_x|\leq \rho$ and $|p_2-q_y|\leq \rho$. Further, since $(P_6(x, y, j))$ is true we know that there exist $p_4\leq p_6$ on $\overline{s_j s_{j+1}}$ such that $|p_4-q_x|\leq \rho$ and $|p_6-q_y|\leq \rho$. Now set $\widetilde{p_1}=\max\{\min\{q_l+\rho, p_4\}, q_i-\rho\}$ and $\widetilde{p_2}=\max\{\min\{q_l+\rho, p_6\}, q_i-\rho\}$. It holds that $\widetilde{p_1},\widetilde{p_2}\in [q_i-\rho, q_l+\rho]$ and $\widetilde{p_1} \leq \widetilde{p_2}$. Further, since $(q_i-\rho)-q_x\leq \rho$ and $q_x-(q_l+\rho)\leq\rho$ it holds that $|\widetilde{p_1}-q_x|\leq \rho$. Similarly, $|\widetilde{p_2}-q_y|\leq \rho$. This contradicts the assumption. Hence, $d_{\text{F}}(\langle q_i, \dotso, q_l\rangle, \langle q_i-\rho, q_l+\rho\rangle)\leq \rho$ and $\langle q_i-\rho, q_l+\rho\rangle$ is a forward edge. So, $f_i(q)\geq l$.

    To prove the other direction, assume $f_i(q)\geq l$ and $(P_4( x, j))$ is true for all $  i\leq x\leq  l$. Therefore, by Observation \ref{o: f and b}, it holds that ${d_{\text{F}}(\langle q_ i, \dotso, q_ l\rangle, \langle q_ i-\rho, q_ l+\rho\rangle)\leq \rho}$ and $q_ i-\rho\leq q_ l+\rho$. Let $i\leq x< y\leq l$. Then, there exists points $p_x<p_y$ on the edge $\overline{q_ i-\rho, q_ l+\rho}$ such that $|p_x-q_x|\leq \rho$, $|p_y-q_y|\leq \rho$. Further, there exists points $p_1, p_2\in [s_j, s_{j+1}]$ such that $|p_1-q_x|\leq \rho$, $|p_2-q_y|\leq \rho$ by the properties of predicate $(P_4)$. We define $p_4=\min\{\max\{s_j, p_x\}, s_{j+1}\}$ and $p_6=\min\{\max\{s_j, p_y\}, s_{j+1}\}$. It follows that $p_4\leq p_6$ and $p_4, p_6\in\overline{s_j s_{j+1}}$. Furthermore, if $p_4=s_j$, then $p_x\leq s_j\leq p_1$, resulting in $|s_j-q_x|\leq \rho$. If $p_4=s_{j+1}$ then $p_1\leq s_{j+1}\leq p_x$ and $|s_{j+1}-q_x|\leq \rho$. Therefore, $|p_4-q_x|\leq \rho$. Similarly, it follows that $|p_6-q_y|\leq \rho$. The points $p_4, p_6$ fulfill the conditions of the definition of the monotone in s predicate $(P_6(x, y, j))$, i.e., $(P_6(x, y, j)$ is true.
\end{proof}

Observation \ref{OP1234} and Lemma \ref {lP5} and \ref{lP6} show how we can determine whether a valid sequence of cells is feasible in $F_\rho(q, s)$ using intervals defined by $s$ and $\rho$ for the vertices of $q$ and the forward and backward numbers $f_i(q)$ and $b_i(q)$.

\section{Data Structure}\label{s:Data structure}
In this section, we present two data structures solving the \FQ problem. We start with some assumptions, that can be made for the time series.
Let $s=\langle s_1, \dotso, s_{t}\rangle$ be a time series. Then we can assume that either ${s_{2j-1}\leq s_{2j} \geq s_{2j+1}}$ for all $j$ (\emph{M-shaped}), or  ${s_{2j-1}\geq s_{2j} \leq s_{2j+1}}$ for all $j$ (\emph{W-shaped}), because if $s=\langle s_1, \dotso, s_{2j-1}, s_{2j}, s_{2j+1},\dotso, s_{t}\rangle$ with ${s_{2j-1}\leq s_{2j} \leq s_{2j+1}}$ or ${s_{2j-1}\geq s_{2j} \geq s_{2j+1}}$, it has the same shape as $\langle s_1, \dotso, s_{2j-1},$ $s_{2j+1},\dotso, s_{t}\rangle$. Moreover, we can assume that the complexity of all time series in $S$ is exactly~$t_s$ by simply adding dummy vertices in the end otherwise, since the value of two consecutive vertices can also be equal. In \cref{fig:f and b}, the time series $q$ is W-shaped. 

The query algorithm iterates over all valid sequences of cells $\mathscr C$. By Lemma \ref{lsequenceofcells}, $\mathscr{C}$ is feasible in the free space diagram $F_\rho(q, s)$ if and only if the predicates induced by $\mathscr{C}$ are true. The truth assignment of all needed predicates $(P_1), (P_2), (P_3), (P_4)$ and $(P_5)$ can be determined using intervals defined by $s$ and $\rho$.
Furthermore, $\mathscr{C}$ can only be feasible in $F_\rho(q,s)$ if for all $(i-1, j), (l, j)\in \mathscr{C}$ with $i\leq l$, the monotone in $s$ predicate $(P_6(i, l, j))$ is true. By \cref{lsequenceofcells}, we can use the forward number $f_i(q)$ in the case that $s_j\leq s_{j+1}$ (i.e., $j$ is odd if $s$ is M-shaped) to determine whether $(P_6(i, l, j))$ is true. We define the \emph{forward number} $f_i(\mathscr C)$ as the highest such number $l$ that is needed for $\mathscr C$ to be feasible in $F_\rho(q,s)$. Respectively, if $s_j\geq s_{j+1}$ (i.e., $j$ is even if $s$ is M-shaped) for $b_i(q)$ and we define the \emph{backward number} $b_i(\mathscr C)$. Formally, we get 
\begin{align*}
    f_i(\mathscr{C})=\left\{ 
    \begin{array}{ll}
        l\geq i, & \text{if } \exists\  (i-1, j), (l, j)\in \mathscr{C} \text{ s.t. } j\text{ is odd and }(l+1, j)\notin \mathscr{C}, \\
        i, & \text{otherwise}
    \end{array}\right.
\end{align*}
and 
\begin{align*}
    b_i(\mathscr{C})=\left\{
    \begin{array}{ll}
        l\geq i, & \text{ if }\exists\  (i-1, j), (l, j)\in \mathscr{C}\text{ s.t. }j\text{ is even and }(l+1, j)\notin \mathscr{C}, \\
        i, & \text{otherwise}. 
    \end{array}\right.
\end{align*}
As $\mathscr C$ is valid there exists a unique $j$ such that $(i-1, j), (i, j), \dotso, (l, j)\in \mathscr{C}$. Hence, the numbers $f_i(\mathscr{C})$ and $b_i(\mathscr{C})$ are well-defined.
Note that we do not need $f_1(\mathscr{C})$, $b_1(\mathscr{C})$, $f_{t_q}(\mathscr{C})$ and $b_{t_q}(\mathscr{C})$ because we never consider $(P_6(1, l, j))$ and $(P_6(t_q, l, j)$. 

\subparagraph*{The Data structure.}
    Let $S_M$ be the set of stored time series that are M-shaped and $S_W$ the set of those that are W-shaped. We will describe how $S_M$ is stored. The time series in $S_W$ are stored in the same way after they were mirrored at the origin. Consequently, for those the query algorithm mirrors the query time series $q$ at the origin and is then the same as for the time series in $S_M$.
    
    For all valid sequences of cells $\mathscr{C}$, we build two associated rectangle stabbing data structures storing the time series in $S_M$ as $t_q$-dimensional axis-aligned rectangles. One for the case that the query time series $q$ is M-shaped and the other one for the case that $q$ is W-shaped. Knowing the shape of $q$, Observation \ref{OP1234} and Lemma \ref{lP5} define for every $s\in S_M$ an interval for every vertex $q_i$ of the query time series in which it must lie such that $\mathscr{C}$ can be feasible in $F_{\rho}(q, s)$. For a time series $s$, we store the Cartesian product of those $t_q$ intervals in the associated rectangle stabbing data structure. 
    Note that even if the complexity of the stored time series is greater than $t_q$, we store only a $t_q$-dimensional rectangle for it.

\subparagraph*{The Query Algorithm.}
    Let $q$ be a query time series of complexity $t_q$. The query algorithm starts with computing the numbers $f_1(q), \dotso, f_{t_q}(q), b_1(q), \dotso, b_{t_q}(q)$. For all valid sequences of cells $\mathscr{C}$, we check whether $f_i(\mathscr{C})\leq f_i(q)$ and $b_i(\mathscr{C}) \leq b_i(q)$ for all $i$. If so, we do a query search in the rectangle stabbing data structure depending on $\mathscr C$ and the shape of $q$ with the point $(q_1, q_2, \dotso, q_{t_q})$ and output all time series associated with a rectangle containing this point.

\begin{theorem}\label{t: p rect stabbing}
    The \FQ problem for constant $t_q\geq 2$ and $t_s$ can be solved with a data structure of size $S_R(n, t_q)$ using $Q_R(n, t_q)+\mathcal{O}(k)$ query time, where $k$ is the size of the output (without duplicates) and $S_R(n, t_q)$ denotes the size and $Q_R(n, t_q)$ the query time of a rectangle stabbing data structure that stores $n$ rectangles of dimension $t_q$.
    In particular, there exists a data structure of size $\mathcal{O}(n \log ^{t_q-2} n)$ and query time $\mathcal{O}(\log^{t_q-1}n+k)$ using the rectangle stabbing data structure by Afshani, Arge and Larsen \cite{AAL12}.
\end{theorem}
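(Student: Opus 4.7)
The plan is to verify correctness of the construction described just above the theorem and then bound its resources by those of the underlying rectangle-stabbing structure. After the canonicalization preceding the construction, every stored and every query time series is either M-shaped or W-shaped, and by reflecting W-shaped stored curves at the origin (and likewise the query when necessary) it suffices to analyze the case $s\in S_M$ with a query $q$ whose shape is fixed in advance. For each valid sequence of cells $\mathscr{C}$ on the $(t_q-1)\times(t_s-1)$ grid, and each of the two possible shapes of $q$, we build one rectangle-stabbing structure storing, for every $s\in S_M$, a $t_q$-dimensional axis-aligned rectangle $R_{\mathscr{C}}(s)$. The $i$-th coordinate interval of $R_{\mathscr{C}}(s)$ is the intersection of the intervals dictated by Observation~\ref{OP1234} and Lemma~\ref{lP5} for those predicates involving $q_i$ that are induced by $\mathscr{C}$. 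Since the shape of $q$ fixes the orientation of each edge $\overline{q_i q_{i+1}}$, the case splits in Observation~\ref{OP1234}(iii) and in Lemma~\ref{lP5} collapse and these constraints are indeed intervals, so $R_{\mathscr{C}}(s)$ is axis-aligned.

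For correctness, fix $q$ and $s$. By Lemma~\ref{lAltGodau} combined with Lemma~\ref{lsequenceofcells}, $d_F(q,s)\le\rho$ iff some valid $\mathscr{C}$ has all of its induced predicates true. The induced $(P_1),(P_2),(P_3),(P_4)$ and $(P_5)$ predicates are equivalent to $(q_1,\dots,q_{t_q})\in R_{\mathscr{C}}(s)$ by Observation~\ref{OP1234} and Lemma~\ref{lP5}, and the induced $(P_6)$ predicates, by Lemma~\ref{lP6}, reduce to the pure $q$-conditions $f_i(\mathscr{C})\le f_i(q)$ and $b_i(\mathscr{C})\le b_i(q)$ together with the $(P_4)$ predicates already enforced by $R_{\mathscr{C}}(s)$. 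Hence the query algorithm---compute $f_i(q),b_i(q)$ in $O(t_q^2)=O(1)$ time, iterate over the $O(1)$ sequences $\mathscr{C}$, discard those for which $q$ fails the forward/backward test, and otherwise stab $(q_1,\dots,q_{t_q})$ in the rectangle-stabbing structure associated with $\mathscr{C}$---reports $s$ in at least one iteration iff $d_F(q,s)\le\rho$.

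The number of valid sequences of cells is $\binom{t_q+t_s-4}{t_q-2}=O(1)$, so building $O(1)$ rectangle-stabbing structures on $n$ rectangles each yields total space $O(S_R(n,t_q))$ and query time $O(Q_R(n,t_q)+k')$, where $k'$ is the total number of reports summed over all iterations. Each $s$ with $d_F(q,s)\le\rho$ is reported at most once per sequence $\mathscr{C}$ that is feasible in $F_\rho(q,s)$, and there are $O(1)$ such sequences, so $k'=O(k)$. Instantiating with the rectangle-stabbing structure of Afshani, Arge and Larsen~\cite{AAL12} then gives the claimed $O(n\log^{t_q-2}n)$ space and $O(\log^{t_q-1}n+k)$ query time. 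The main technical obstacle---exactly what Section~\ref{s: lemmas} is designed to resolve---is handling the monotone-in-$s$ predicates: the diagonal constraint that prevents the query region from being an axis-aligned box in Example~\ref{e:small-example} is replaced by the two auxiliary quantities $f_i(q)$ and $b_i(q)$, which depend on $q$ alone and can be tested outside of the rectangle-stabbing data structure at asymptotically no cost.
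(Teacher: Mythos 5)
Your proposal is correct and follows essentially the same route as the paper's proof: per valid cell sequence (and per query shape) a rectangle-stabbing structure over $t_q$-dimensional boxes built from Observation~\ref{OP1234} and Lemma~\ref{lP5}, with the monotone-in-$s$ predicates offloaded to the forward/backward numbers via Lemma~\ref{lP6}, and the $\binom{t_q+t_s-4}{t_q-2}=\mathcal{O}(1)$ count of sequences giving the space, query, and duplicate bounds. The only (immaterial) deviation is your $\mathcal{O}(t_q^2)$ estimate for computing the forward and backward numbers, where the paper uses $\mathcal{O}(t_q^3\log t_q)$ via Alt--Godau; both are constant for constant $t_q$.
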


\begin{proof}
    For $S_M$, two rectangle stabbing data structures are stored for every valid sequence of cells $\mathscr C$. In each, there are stored at most $n$ axis-aligned rectangles of dimension $t_q$. In a valid sequence of cells, every step is either $(i,j), (i, j+1)$ (right) or $(i, j), (i+1,j)$ (upwards) and the first cell is $(1, 1)$ and the last is $(t_q-1, t_s-1)$. Therefore, a valid sequence of cells consists of $t_q+t_s-4$ steps and $t_q-2$ upwards steps. Hence, the number of valid sequence of cells is $\binom{t_q+t_s-4}{t_q-2}$. Since $t_q$ and $t_s$ are considered constant, this is a constant, which completes the proof of the claimed size of the data structure.\footnote{In more detail, if $t_s\geq t_q$ it holds that $\binom{t_q+t_s-4}{t_q-2}\leq \binom{2t_s}{t_q}\leq \left(\frac{2et_s}{t_q}\right)^{t_q}$ by Stirling's approximation of the factorial function.}
    
    Computing the numbers $f_1(q), \dotso, f_{t_q}(q), b_1(q), \dotso, b_{t_q}(q)$ can be done in $\mathcal{O}({t_q}^3 \log t_q)$ time by simply computing all distances $d_F(\langle q_i, \dotso, q_k\rangle, \langle q_i-\rho, q_k+\rho\rangle)$ and $d_F(\langle q_i, \dotso, q_k\rangle,$ $\langle q_i+\rho, q_k-\rho\rangle)$. Each computation takes $\mathcal{O}(t_q \log t_q)$ time by Alt and Godau \cite{CompFreDist}. The query time follows by the fact that for all valid sequences of cells $\mathscr{C}$ we perform at most one query search in an associated rectangle stabbing data structure.
    
    By Observation \ref{OP1234}, Lemma \ref{lsequenceofcells}, \ref{lP5}, and \ref{lP6}, a sequence of cells $\mathscr{C}$ is feasible in $F_\rho(q, s)$ for an M-shaped time series $s\in S$ if and only if 
    all vertices of $q$ lie in the intervals defined by the induced predicates of $\mathscr{C}$ depending on $s$ and Observation \ref{OP1234} and Lemma \ref{lP5}, and
    $f_i(q)\geq f_i(\mathscr{C})$ and $b_i(q)\geq b_i(\mathscr{C})$ for all $i\in \{2, \dotso, t_q-1\}$.
     Therefore, the correctness follows by the fact that we iterated over all valid sequences of cells and by Lemma \ref{lAltGodau}.
\end{proof}

The output in Theorem~\ref{t: p rect stabbing} may contain a constant fraction of duplicates. As such it cannot be easily used for range counting. To remove duplicates, one can use standard techniques, such as hashing.

Using an orthogonal range searching data structure it is possible to store the time series as $t_s$-dimensional points and the query time series defines then $t_s$-dimensional axis-aligned rectangles. 

\begin{corollary}
    The \FQ problem for constant $t_q$ and $t_s>2$ can be solved with a data structure of size $S(n, t_s)$ using $Q(n, t_s)+\mathcal{O}(k)$ query time, where $k$ is the size of the output (without duplicates) and $S(n, t_s)$ denotes the size and $Q(n, t_s)$ the query time of an orthogonal range searching data structure that stores $n$ points in dimension $t_s$. In particular, there exists a data structure of size $\mathcal{O}\left(n(\log n/\log \log n)^{t_s-1}\right)$ and query time $\mathcal{O}(\log n(\log n/\log \log n)^{t_s-3}+k)$. 
\end{corollary}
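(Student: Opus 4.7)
The plan is to mirror the construction in \cref{t: p rect stabbing} by swapping the roles of the stored objects and the query: each time series $s \in S$ is stored as a point $(s_1, \ldots, s_{t_s}) \in \mathbb R^{t_s}$, and for a query $q$ a family of axis-aligned rectangles in $\mathbb R^{t_s}$ is produced so that a stored $s$ lies in one of them iff some valid sequence of cells $\mathscr C$ is feasible in $F_\rho(q,s)$. As in \cref{t: p rect stabbing} I would partition $S$ by the shape (M- or W-) of $s$ and, for every valid $\mathscr C$ and every assumed shape of $q$, build one associated orthogonal range searching structure; only constantly many such combinations arise because $t_s$ and $t_q$ are constant.

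Within each combination the task is to translate the predicates $(P_1)$--$(P_6)$ induced by $\mathscr C$ into axis-aligned conditions on the coordinates of $s$. By \cref{OP1234}, $(P_1)$ and $(P_2)$ constrain $s_1$ and $s_{t_s}$ to intervals; $(P_3(i,j))$ constrains $s_j$ to an interval whose endpoints depend on the orientation of $\overline{q_i q_{i+1}}$, which is known from the assumed shape of $q$; and $(P_4(i,j))$ constrains $s_j$ and $s_{j+1}$ individually once the orientation of $\overline{s_j s_{j+1}}$ is fixed by the shape of $s$. Predicate $(P_6)$ is handled exactly as in \cref{t: p rect stabbing}: at query time I compute $f_i(q), b_i(q)$ and compare them with the corresponding values $f_i(\mathscr C), b_i(\mathscr C)$ derived from $\mathscr C$.

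The main obstacle is $(P_5)$, whose characterisation in \cref{lP5} involves the non-axis-aligned comparisons $|s_j - s_k|$ versus $2\rho$ and the sign of $s_j - s_k$. To bypass this I would first establish a symmetric analog of \cref{lP6} in which the roles of $q$ and $s$ are interchanged: for every edge $\overline{q_i q_{i+1}}$ with $q_i \le q_{i+1}$, the conjunction of $(P_5(i, x, y))$ for all $j \le x < y \le k$ is equivalent to $f_j(s) \ge k$ together with $(P_3(i, x))$ for every $j \le x \le k$, and analogously using $b_j(s)$ when $q_i \ge q_{i+1}$. The proof of \cref{lP6} is invariant under swapping $q$ and $s$, so no new ideas are needed. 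Since $f_j(s), b_j(s) \in \{1, \ldots, t_s\}$, the tuple of all forward/backward numbers of~$s$ takes only $O(1)$ distinct values; I would further partition $S$ according to this tuple and build one data structure per class. Inside each class the condition $f_j(s) \ge k$ has a fixed truth value, so all remaining constraints on $s$ reduce to axis-aligned intervals.

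Combining these ingredients, there are only $O(1)$ combinations of $(\mathscr C, \text{shape of } s, \text{shape of } q, \text{forward/backward tuple})$, each yielding one orthogonal range searching instance with at most $n$ points in dimension $t_s$. Applying the data structure of Afshani, Arge and Larsen~\cite{AAL12} to each instance and summing gives total size $\mathcal O(n(\log n/\log\log n)^{t_s - 1})$ and query time $\mathcal O(\log n(\log n/\log\log n)^{t_s - 3} + k)$. Correctness then follows by combining \cref{lAltGodau}, \cref{lsequenceofcells}, \cref{OP1234}, \cref{lP5}, \cref{lP6} and its symmetric analog; as in \cref{t: p rect stabbing}, the output may contain duplicates but only by a constant factor, which is absorbed in the $\mathcal O(k)$ term after de-duplication.
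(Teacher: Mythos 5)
Your proposal is correct and is essentially the paper's own construction: the paper likewise stores each $s$ as a point in $\mathbb{R}^{t_s}$, builds constantly many orthogonal range searching structures indexed by the valid sequence of cells and the M/W shapes, uses the forward/backward numbers of $s$ (available at preprocessing) to absorb the monotonicity predicates along edges of $q$, and obtains your ``symmetric analog of \cref{lP6}'' simply by exchanging the roles of the two curves, i.e.\ by working in $F_\rho(s,q)$ and reusing \cref{OP1234}, \cref{lP5} and \cref{lP6} there. The only bookkeeping differences are that the paper filters the curves stored in the structure for $\mathscr C$ by $f_j(s)\geq f_j(\mathscr C)$ and $b_j(s)\geq b_j(\mathscr C)$ rather than partitioning $S$ by the full forward/backward tuple, and it handles the monotone-in-$s$ predicates through the role-swapped \cref{lP5} (whose case distinction then depends only on $q$ and is decided at query time) instead of your query-time comparison of $f_i(q),b_i(q)$ with $f_i(\mathscr C),b_i(\mathscr C)$; both variants give the stated bounds.
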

\begin{proof}
    We use a similar idea as in the proof of Theorem \ref{t: p rect stabbing} with the difference that the time series in $S$ are stored as $t_s$-dimensional points and the query time series defines $t_s$-dimensional axis-aligned rectangles. We build two data structures one for M-shaped query time series and one for W-shaped time series. We describe only the one for the M-shaped case here. The other one is build symmetrically. Note that in the following we exchange the role of $q$ and $s$ and consider $F_\rho(s, q)$ instead of $F_\rho(q, s)$. For every valid sequence of cells~$\mathscr{C}$ in $F_\rho(s,q)$, we build an orthogonal range searching data structures storing the time series $s$ where $f_j(s)\geq f_j(\mathscr{C})$ and $b_j(s)\geq b_j(\mathscr{C})$ for all $j$. The query algorithm computes for all valid sequences of cells $\mathscr{C}$ a rectangle $R(\mathscr{C})$ such that $\mathscr C$ is feasible in $F_\rho(s, q)$ if and only if  $(s_1, \dotso, s_{t_s})\in R(\mathscr C)$ and $s$ is stored in the data structure defined by $\mathscr C$. The rectangle can be computed with Observation \ref{OP1234} and Lemma \ref{lP5}. The correctness follows by Observation \ref{OP1234} and Lemma \ref{lsequenceofcells}, \ref{lP5} and \ref{lP6}. The bounds for the size and query time follow in the same way as in Theorem \ref{t: p rect stabbing} and by using the orthogonal range searching data structure by Afshani, Arge and Larsen \cite{AAL12}.
\end{proof}

We believe that with some modifications it is possible to solve the \FQ problem also for the case where the complexity of the query time series is not given at preprocessing time within the same bounds. Further, we believe that using orthogonal intersection searching (see~\cite{EM1981}) it is possible to build a data structure of size $\mathcal{O}(n\log^{t_q}n)$ and query time $\mathcal{O}(\log^{t_q-1}n)$ for the \FQ problem where the distance threshold is not given at preprocessing time.

\bibliography{Literature.bib}

\end{document}